\newrobustcmd\email[1]{\normalfont\texttt{#1}}
\newif\ifproofappendix
\newrobustcmd\labelwithproof[1]{%
\label{#1}%
\ifproofappendix%
\marginnote{\footnotesize{%
  First stated at page~\pageref{#1}.%
}}
\else%
\marginnote{\footnotesize{%
  See the proof of \Cref{#1} at page~\pageref{proof-#1}.%
}}%
\fi%
}
\newenvironment{proofappendix}[2]
  {
    \proofappendixtrue%
    #2*
    \proofappendixfalse%
    \begin{proof}[Proof of \Cref{#1}]
      \label{proof-#1}
  }
  { 
    \end{proof}
  }
\newrobustcmd\recall[1]{
  \proofappendixtrue%
    #1*
  \proofappendixfalse%
}
\newrobustcmd\interleaving{%
  \mathrel{%
    \withkl{\kl[\interleaving]}{%
      \cmdkl{\bowtie}
    }%
  }%
}
\knowledge{\interleaving}{notion}
\knowledgenewrobustcmd\complete[1]{\cmdkl{C_{#1}}}
\knowledgenewrobustcmd\parys[1]{\cmdkl{P_{#1}}}
\knowledgenewrobustcmd\succinct[1]{\cmdkl{S_{#1}}}
\knowledgenewrobustcmd{\flower}{\cmdkl{\mathcal{F}}}
\knowledgenewrobustcmd{\seq}[1]{\cmdkl{\bigl\langle}#1\cmdkl{\bigr\rangle}}
\knowledgenewrobustcmd{\Attr}{\cmdkl{\mathrm{Attr}}}
\newcommand{\Even}{\mathrm{Even}}
\newcommand{\Odd}{\mathrm{Odd}}
\newcommand{\Gc}{\mathcal{G}}
\newcommand{\Hc}{\mathcal{H}}
\newcommand{\Ic}{\mathcal{I}}
\newcommand{\Jc}{\mathcal{J}}
\newcommand{\Kc}{\mathcal{K}}
\newcommand{\Pc}{\mathcal{P}}
\newcommand{\Tc}{\mathcal{T}}
\newcommand{\eset}[1]{\bigl\{\, #1 \,\bigr\}}
\declaretheorem[name=Theorem]{theorem}
\declaretheorem[name=Proposition, sibling=theorem]{proposition}
\declaretheorem[name=Property, sibling=theorem]{property}
\declaretheorem[name=Corollary, sibling=theorem]{corollary}
\declaretheorem[name=Lemma, sibling=theorem]{lemma}
\title{Universal Algorithms for Parity Games and Nested Fixpoints\thanks{%
First version: January 2020. Full version of a paper accepted
in Lecture Notes in Computer Science, volume 13660. \sloppy
Emails: \email{marcin.jurdzinski[at]warwick.ac.uk},
\email{remi.morvan[at]ens-paris-saclay.fr} and \email{thejaswini.\allowbreak{}raghavan.1[at]warwick.ac.uk}.
}}
\author[1]{\href{https://orcid.org/0000-0003-3640-8481}{Marcin Jurdzi\'nski}}
\author[2,3]{\href{https://orcid.org/0000-0002-1418-3405}{Rémi Morvan}}
\author[1]{K. S. Thejaswini}
\affil[1]{Department of Computer Science, University of Warwick}
\affil[2]{École normale supérieure Paris-Saclay}
\affil[3]{LaBRI, Univ. Bordeaux, CNRS \& Bordeaux INP}
\date{August 2022}
\begin{document}

\maketitle

\begin{abstract}
  An "attractor decomposition" meta-algorithm for solving "parity games" is given that generalises the classic McNaughton-Zielonka algorithm and its recent "quasi-polynomial" variants due to Parys (2019), and to Lehtinen, Schewe, and Wojtczak (2019). The central concepts studied and exploited are "attractor decompositions" of "dominia" in "parity games" and the ordered trees that describe the inductive structure of "attractor decompositions". 

  The "universal algorithm" yields "McNaughton-Zielonka", "Parys", and "Lehtinen-Schewe-Wojtczak" algorithms as special cases when suitable "universal trees" are given to it as inputs. The main technical results provide a unified proof of correctness and structural insights into those algorithms. 
  
  Suitably adapting the "universal algorithm" for "parity games" to "fixpoint games" gives a "quasi-polynomial" time algorithm to compute "nested fixpoints" over finite complete lattices.
  
  The "universal algorithms" for "parity games" and "nested fixpoints" can be implemented symbolically.  It is shown how this can be done with $O(\lg d)$ "symbolic space" complexity, improving the $O(d \lg n)$ "symbolic space" complexity achieved by Chatterjee, Dvo\v{r}\'{a}k, Henzinger, and Svozil (2018) for "parity games", where $n$ is the number of vertices and $d$ is the number of distinct "priorities" in a "parity game".

  \emph{Keywords:} parity games, universal trees, attractor decompositions, quasi-polynomial,  fixpoint equations, symbolic algorithms.

  \emph{Acknowledgements.}
  The first and the third author had been supported by the EPSRC grant
  EP/P020992/1 (Solving Parity Games in Theory and Practice). The idea of the design of the universal algorithm has
  been discovered independently and later by Nathana\"el Fijalkow; we thank him for sharing his conjectures with us
  and exchanging ideas about adaptive tree-pruning rules. We also thank 
  our anonymous reviewers and Alexander Kozachinskiy for
  helpful comments on earlier drafts of the paper.

  \alert{This document contains internal hyperlinks, and is best read on an electronic device.}
\end{abstract}

\vfill

\section{Context}
\label{sec:context}

\subsection{Parity games and their significance}

"Parity games" play a fundamental role in automata theory, logic, and
their applications to verification~\cite{EJ91}, program
analysis~\cite{BKMP21,HS21}, and synthesis~\cite{GTW01,LMS20}.  
In particular, "parity games" are very intimately linked to the
problems of emptiness and complementation of non-deterministic
automata on trees~\cite{EJ91,Zie98}, model checking and satisfiability
checking of fixpoint logics~\cite{EJ91,EJS93,BW18}, fair
simulation relations~\cite{EWS05} or evaluation of
nested fixpoint expressions~\cite{HSC16,BKMP21,HS21}. 
It is a long-standing open problem whether "parity games" can be solved
in polynomial time~\cite{EJS93}. 

The impact of "parity games" goes well beyond their home turf of
automata theory, logic, and formal methods.
For example, an answer~\cite{Fri09} of a question posed originally for
"parity games"~\cite{VJ00} has strongly inspired major breakthroughs on
the computational complexity of fundamental algorithms in stochastic 
planning~\cite{Fea10} and linear optimization~\cite{Fri11,FHZ11}, and  
"parity games" provide the foundation for the theory of nested fixpoint
expressions used in program analysis~\cite{BKMP21,HS21} and
coalgebraic model checking~\cite{HSC16}.

\subsection{Related work}

The major breakthrough in the study of algorithms for solving "parity
games" occurred in 2017 when Calude, Jain, Khoussainov, Li, and
Stephan~\cite{CJKLS17} have discovered the first quasi-polynomial
algorithm. 
Three other---and seemingly distinctly different---techniques for
solving "parity games" in "quasi-po\-ly\-no\-mial" time have been proposed in
quick succession soon after: by Jurdzi\'nski and Lazi\'c~\cite{JL17}, 
Lehtinen~\cite{Leh18}, and Lehtinen, Parys, Schewe, and Wojtczak~\cite{LPSW22}. We would like to remark that \cite{LPSW22} is journal paper---describing
two "quasi-polynomial" time algorithms---combining a conference paper of Parys~\cite{Par19} and a 
preprint by Lehtinen, Schewe, and Wojtczak~\cite{LSW19}. To distinguish between the two algorithms, 
we refer to these versions as the algorithms by \AP""Parys"" and by 
\AP""Lehtinen-Schewe-Wojtczak"", respectively.

Czerwi\'nski, Daviaud, Fijalkow, Jurdzi\'nski, Lazi\'c, and
Pa\-rys~\cite{CDFJLP19} have also uncovered an underlying combinatorial
structure of "universal trees" as provably underlying the techniques of 
Calude et al., of Jurdzi\'nski and Lazi\'c, and of Lehtinen. 
Czerwi\'nski et al.\ have also established a quasi-poly\-no\-mial
lower bound for the size of smallest "universal trees", providing 
evidence that the techniques developed in those three papers may be
insufficient for leading to futher improvements in the complexity of
solving "parity games". 
The work of Lehtinen, Parys, Schewe, and Wojtczak~\cite{LPSW22},
who noted that the tree of recursive calls of their algorithms is "universal",
has not been obviously subject to the
quasi-polynomial barrier of Czerwi\'nski et al.~\cite{CDFJLP19},
making it a focus of current activity. 
Their algorithms are obtained by modifying
the classic "McNaughton-Zielonka algorithm"~\cite{McN93,Zie98}, which
has exponential running time in the worst case~\cite{Fri11r}, but
consistently outperforms most other algorithms in
practice~\cite{vDij18}. 

Using these "universal trees" as a crucial structure, there have also been further work to solve nested fixpoint expressions~\cite{HS21,ANP21} in quasi-polynomial time.

\subsection{Our contributions}

In this work we provide a meta-algorithm---the "universal
  attractor decomposition algorithm"---that generalizes
"McNaugh\-ton-Zielonka", "Parys's", and "Lehtinen-Schewe-Wojtczak"
algorithms.
There are multiple benefits of considering the "universal algorithm". 

Firstly, in contrast to "Parys's" and "Lehtinen-Schewe-Wojt\-czak"
algorithms, the "universal algorithm" has a very simple and transparent 
structure that minimally departs from the classic "McNaugh\-ton-Zielonka 
algorithm". 
Secondly, we observe that "Lehtinen-Schewe-Wojtczak algorithm", as well
as non-"adaptive" versions 
(see Sections~\ref{sec:universal-algorithm}
and~\ref{section:early-termination-heuristics}) 
of "McNaugh\-ton-Zielonka" and "Parys's" algorithms, all arise from the
"universal algorithm" by using specific classes of "universal trees",
strongly linking the theory of "universal trees" to the only class of
quasi-polynomial algorithms that had no established formal
relationship to "universal trees" so far. Moreover, since our algorithm can be modified to use any trees, they can also run on several classes of "universal trees" like the Strahler universal trees introduced in the 
work of Daviaud, Jurdzi\'nski and Thejaswini~\cite{DJT20}.

Thirdly, we further develop the theory of "dominia" and their "attractor
decompositions" in "parity games", initiated by Daviaud, Jurdzi\'nski,
and Lazi\'c~\cite{DJL18} and by Daviaud, Jur\-dzi\'n\-ski, and
Lehtinen~\cite{DJL19}, and we prove two new structural theorems
(the "embeddable decomposition theorem" and the "dominion separation
theorem") 
about ordered trees of "attractor decompositions".  

Fourthly, we use the structural theorems to provide a unified proof of
correctness of various "McNaughton-Zielonka"-style algorithms,
identifying very precise structural conditions on the trees of
recursive calls of the "universal algorithm" that result in it correctly
identifying the largest "dominia". 

Fifthly, we identify a structure of nested "fixpoint games", the "parity games" that arise naturally while solving fixpoint expressions which help us solve them in "quasi-polynomial" time using a modification of our "universal algorithm".

Finally, we observe that thanks to its simplicity, the "universal
algorithm" is particularly well-suited for solving "parity games" as well as "nested fixpoint equations"
efficiently in a symbolic model of computation, when large sizes of
input graphs prevent storing them explicitly in memory. 
Indeed, we argue that already a routine implementation of the
"universal algorithm" for "parity games" improves the state-of-the-art "symbolic space"
complexity of solving "parity games" in quasi-polynomial time from 
$O(d \lg n)$ to $O(d)$, but we also show that a more sophisticated 
symbolic data structure allows to further reduce the "symbolic space" of
the "universal algorithm" to~$O(\lg d)$. 
\section{Dominia and decompositions}
\label{sec:dominia}

\subsection{Strategies, traps, and dominia}

A \AP""parity game""~$\Gc$ consists of a finite directed graph 
$(V, E)$ together with a partition $(V_{\Even}, V_{\Odd})$ of the set of vertices~$V$,
and a function $\pi : V \to \eset{0, 1, \dots, d}$ that labels every
vertex~$v \in V$ with a non-negative integer~$\pi(v)$ called its
\AP""priority"".  
We say that a cycle is \emph{even} if the highest vertex "priority" on
the cycle is even; otherwise the cycle is \emph{odd}.
We say that a "parity game" is \AP$(n, d)$-""small@@game"" if it has at
most~$n$ vertices and all vertex "priorities" are at most~$d$. 

For a set~$S$ of vertices, we write $\Gc \cap S$ 
for the substructure of~$\Gc$ whose graph is the subgraph of~$(V, E)$
induced by the sets of vertices~$S$.
Sometimes, we also write $\Gc \setminus S$ to denote 
$\Gc \cap (V \setminus S)$. 
We assume throughout that every vertex has at least one outgoing edge, 
and we reserve the term \AP""subgame"" to substructures $\Gc \cap S$, 
such that every vertex in the subgraph of $(V, E)$ induced by~$S$ has
at least one outgoing edge. 
For a "subgame" $\Gc' = \Gc \cap S$, we sometimes write $V^{\Gc'}$ for
the set of vertices~$S$ that the "subgame"~$\Gc'$ is induced by. 
When convenient and if the risk of confusion is contained, we may
simply write $\Gc'$ instead of $V^{\Gc'}$.

A (positional) Even \AP""strategy"" is a set $\sigma \subseteq E$
of edges such that:
\begin{itemize}
\item
  for every $v \in V_{\Even}$, there is an edge $(v, u) \in \sigma$,
\item
  for every $v \in V_{\Odd}$, if $(v, u) \in E$ then $(v, u) \in \sigma$.
\end{itemize}
We sometimes call all the edges in such an Even "strategy"~$\sigma$ the 
\emph{strategy edges}, and the definition of an Even "strategy" requires
that every vertex in~$V_{\Even}$ has an outgoing strategy edge, and
every outgoing edge of a vertex in~$V_{\Odd}$ is a strategy edge. 

For a non-empty set of vertices $T$, we say that an Even
"strategy"~$\sigma$ \emph{traps Odd in $T$} if no strategy edge 
leaves~$T$, that is, $w \in T$ and $(w, u) \in \sigma$ imply $u \in T$. 
We say that a set of vertices~$T$ is a \AP""trap"" for Odd if there is
an Even "strategy" that traps Odd in~$T$. 

Observe that if~$T$ is a "trap" in a game~$\Gc$ then $\Gc \cap T$ is a
"subgame" of~$\Gc$.
For brevity, we sometimes say that a "subgame" $\Gc'$ is a "trap" if 
$\Gc' = \Gc \cap T$ and the set~$T$ is a "trap" in~$\Gc$. 
Moreover, the following simple \emph{``"trap" transitivity''} property 
holds:
if $T$ is a "trap" for Odd in game~$\Gc$ and $T'$ is a "trap" for Odd in
"subgame"~$\Gc \cap T$ then $T'$ is a "trap" for Odd in~$\Gc$.

For a set of vertices $D \subseteq V$, we say that an Even
"strategy"~$\sigma$ is an \emph{Even dominion strategy} on $D$ if:
  $\sigma$ traps Odd in~$D$ and
  every cycle in the subgraph $(D, \sigma)$ is even.
Finally, we say that a set~$D$ of vertices is an Even \AP""dominion""
if there is an Even dominion strategy on it.

Odd strategies, "trapping" Even, and Odd "dominia" are defined in an
analogous way by swapping the roles of the two players. 
It is an instructive exercise to prove the following two facts about
Even and Odd "dominia".

\begin{proposition}[Closure under union]
  If $D$ and $D'$ are Even (resp.\ Odd) "dominia" then $D \cup D'$ is
  also an Even (resp.\ Odd) "dominion". 
\end{proposition}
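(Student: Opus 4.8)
The plan is to combine the two given dominion strategies into a single one on $D \cup D'$, resolving the overlap $D \cap D'$ asymmetrically in favour of $D$. Fix an Even dominion strategy $\sigma$ on $D$ and an Even dominion strategy $\sigma'$ on $D'$, and set
\[
  \tau = \eset{(v, u) \in \sigma : v \in D} \cup \eset{(v, u) \in \sigma' : v \in D' \setminus D}.
\]
The claim will be that $\tau$ is an Even dominion strategy witnessing that $D \cup D'$ is an Even dominion.

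First I would check that $\tau$ traps Odd in $D \cup D'$ and is a legitimate Even strategy there. For the trapping part: every $\tau$-edge out of a vertex of $D$ is a $\sigma$-edge, so it stays in $D$ because $\sigma$ traps Odd in $D$; every $\tau$-edge out of a vertex of $D' \setminus D$ is a $\sigma'$-edge, so it stays in $D' \subseteq D \cup D'$ because $\sigma'$ traps Odd in $D'$. This makes $\Gc \cap (D \cup D')$ a subgame, and it also records the key structural fact that \emph{$D$ is closed under $\tau$}: no $\tau$-edge leaves $D$. That $\tau$ is a valid Even strategy is then routine—each Even vertex of $D \cup D'$ keeps at least one outgoing edge (inherited from whichever of $\sigma,\sigma'$ governs it), and each Odd vertex keeps all its outgoing edges, since $\sigma$ and $\sigma'$ contain all outgoing edges of the Odd vertices they respectively govern, and trapping guarantees these edges land inside $D \cup D'$.

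The main obstacle is the cycle condition, because a priori a cycle in the subgraph $(D \cup D', \tau)$ might alternate between $D$ and $D' \setminus D$ and there is no immediate reason such a mixed cycle is even. This is exactly where closure of $D$ does the work: since $D$ is $\tau$-invariant, any cycle that visits a vertex of $D$ has all of its subsequent vertices in $D$, and therefore lies entirely in $D$; a cycle that avoids $D$ lies entirely in $D' \setminus D$. A cycle of the first kind is a cycle of $(D, \sigma)$, hence even; a cycle of the second kind uses only $\sigma'$-edges among vertices of $D'$, so it is a cycle of $(D', \sigma')$, hence even as well. Thus every cycle in $(D \cup D', \tau)$ is even, so $\tau$ is an Even dominion strategy on $D \cup D'$. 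The Odd case follows verbatim by swapping the roles of the two players.
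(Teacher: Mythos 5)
Your proposal is correct: the merge-with-priority strategy $\tau$ (follow $\sigma$ on $D$, follow $\sigma'$ on $D' \setminus D$), together with the observation that $D$ is $\tau$-invariant so every cycle lies wholly in $D$ or wholly in $D' \setminus D$, is exactly the standard argument, and the cycle analysis — the only genuinely delicate point — is handled properly. The paper itself offers no proof to compare against (it explicitly leaves this proposition as "an instructive exercise"); the only cosmetic remark is that under the paper's convention a strategy is a global object on all of $\Gc$, so one should formally extend $\tau$ outside $D \cup D'$ (say, by $\sigma$), which changes nothing since the dominion conditions only constrain edges leaving $D \cup D'$.
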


\begin{proposition}["Dominion" disjointness]
\label{prop:dominion-disjointness}
  If $D$ is an Even "dominion" and $D'$ is an Odd "dominion" then 
  $D \cap D' = \emptyset$. 
\end{proposition}

From closure under union it follows that in every "parity game", there
is the largest Even "dominion"~$W_{\Even}$ 
(which is the union of all Even "dominia") 
and the largest Odd "dominion"~$W_{\Odd}$
(which is the union of all Odd "dominia"), and from "dominion"
disjointness it follows that the two sets are disjoint. 
The positional determinacy theorem states that, remarkably, the
largest Even "dominion" and the largest Odd "dominion" form a partition of
the set of vertices.

\begin{theorem}[Positional determinacy~\cite{EJ91}]
\label{thm:positional-determinacy}
  Every vertex in a given "parity game" is either in the largest Even
  "dominion" or in the largest Odd "dominion". 
\end{theorem}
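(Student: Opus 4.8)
The plan is to prove the theorem by induction, strengthening it slightly: I will show that in every parity game the vertex set is partitioned into an Even dominion and an Odd dominion, each equipped with a positional dominion strategy. Since Proposition~\ref{prop:dominion-disjointness} already guarantees that $W_{\Even}$ and $W_{\Odd}$ are disjoint, and closure under union makes them well-defined, it suffices to show that every vertex lies in \emph{some} dominion; the two regions I construct will then be contained in $W_{\Even}$ and $W_{\Odd}$ respectively.

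The essential tool is the notion of an \emph{attractor}. For a player, say Even, and a target set $U$, let $\Attr_{\Even}(U)$ be the set of vertices from which Even can force the play to reach $U$, computed as the least fixpoint that adds an Even vertex as soon as it has one edge into the current set and an Odd vertex once all of its edges enter the current set; the associated positional \emph{attractor strategy} strictly decreases the rank of entry at every Even vertex. Two facts about attractors drive the argument. First, the complement $V \setminus \Attr_{\Even}(U)$ is a trap for Even---every Even vertex there keeps all its edges inside and every Odd vertex retains at least one---so it induces a subgame. Second, attractor strategies compose with dominion strategies: combining them yields a single positional strategy on the attracted region whose only way to cycle is to revisit the target $U$, because ranks strictly decrease off $U$.

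I would then carry out the McNaughton--Zielonka recursion. Assume without loss of generality (by symmetry of the two players) that the greatest priority $p$ in $\Gc$ is even, and let $N = \pi^{-1}(p) \neq \emptyset$ and $A = \Attr_{\Even}(N)$. The subgame $\Gc \setminus A$ has strictly fewer vertices, so by the induction hypothesis its vertices split into an Even dominion $W_{\Even}^1$ and an Odd dominion $W_{\Odd}^1$. If $W_{\Odd}^1 = \emptyset$, I claim the whole of $\Gc$ is an Even dominion: the strategy that plays the inductive Even strategy on $W_{\Even}^1$, the attractor strategy on $A \setminus N$, and any edge out of the Even vertices of $N$, admits only two kinds of cycles---those confined to $W_{\Even}^1$, which are even because $W_{\Even}^1$ is a dominion, and those meeting $A$, which must pass through a priority-$p$ vertex and are therefore even, as $p$ is the global maximum. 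Otherwise $W_{\Odd}^1 \neq \emptyset$; using trap transitivity one checks that $W_{\Odd}^1$, and then its Odd attractor $B = \Attr_{\Odd}(W_{\Odd}^1)$, is an Odd dominion in $\Gc$. Recursing on the strictly smaller subgame $\Gc \setminus B$ yields a partition $W_{\Even}^2, W_{\Odd}^2$, and I would verify that $W_{\Even}^2$ is an Even dominion of $\Gc$ (again by trap transitivity) and that $W_{\Odd}^2 \cup B$ is an Odd dominion of $\Gc$, so that $V = W_{\Even}^2 \cup (W_{\Odd}^2 \cup B)$ is the desired partition.

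The main obstacle is the strategy-combination bookkeeping in the two dominion claims, rather than the recursion itself. The delicate point is that a region that is a dominion in a subgame need not be closed under the opponent's moves in the full game: for instance an Even vertex of $W_{\Odd}^2$ may have edges into $B$. One must check that all such escaping edges lead into the \emph{same} combined region (here $W_{\Odd}^2 \cup B$) and that, under the combined strategy, a play entering the attracted part $B$ can never return to $W_{\Odd}^2$---so every cycle is trapped either inside $B$ or inside $W_{\Odd}^2$ and inherits the correct dominating parity from the relevant dominion strategy. Trap transitivity and the strictly decreasing attractor rank are exactly the levers that make this cycle analysis go through.
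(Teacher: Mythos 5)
Your proof is correct, and it is essentially the argument the paper itself appeals to: positional determinacy obtained as a by-product of the McNaughton--Zielonka recursion (attract to the top even priority, recurse on the complementary trap, attract the opponent's winning region, recurse again, with trap transitivity ensuring sub-dominia lift to the full game). The only difference is packaging: the paper certifies the two regions by having the enhanced algorithm build attractor decompositions and then invoking Propositions~\ref{prop:decomposition-dominion-even} and~\ref{prop:decomposition-dominion-odd}, whereas you compose the positional strategies explicitly and carry out the cycle analysis by hand.
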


\subsection{Reachability strategies and attractors}

In a "parity game"~$\Gc$, for a target set of vertices~$B$
(``bullseye'') and a set of vertices~$A$ such that $B \subseteq A$, 
we say that an Even "strategy"~$\sigma$ is an 
\emph{Even reachability "strategy" to $B$ from~$A$} if every infinite 
path in the subgraph $(V, \sigma)$ that starts from a vertex in~$A$
contains at least one vertex in~$B$.

For every target set~$B$, there is the largest
(with respect to set inclusion) set from which there is an Even
reachability "strategy" to~$B$ in~$\Gc$;
we call this set the Even \AP""attractor"" to~$B$ in~$\Gc$ and denote
it by~$\AP\intro*\Attr_\Even^\Gc(B)$.
Odd reachability strategies and Odd "attractors" are
defined analogously.

We highlight the simple facts that if~$A$ is an "attractor" for a player
in~$\Gc$ then its complement $V \setminus A$ is a "trap" for her; and
that attractors are monotone operators: if $B' \subseteq B$ then the
"attractor" to~$B'$ is included in the "attractor" to~$B$.

\subsection{Attractor decompositions}

If $\Gc$ is a "parity game" in which all "priorities" do not exceed a
non-negative even number~$d$ then we say that 
\[
\Hc \: = \:
\seq{A, (S_1, \Hc_1, A_1), \dots, (S_k, \Hc_k, A_k)}
\]
is an Even \AP$d$-""attractor decomposition"" of~$\Gc$ if:
\begin{itemize}
\item
  $A$ is the Even "attractor" to the (possibly empty) set of vertices of
  "priority"~$d$ in~$\Gc$;
\end{itemize}
and setting $\Gc_1 = \Gc\setminus A$, for all $i = 1, 2, \dots, k$, we have: 
\begin{itemize}
\item
  $S_i$ is a 
  non-empty 
  "trap" for Odd in~$\Gc_i$ in which every vertex "priority" is at
  most~$d-2$;  
\item
  $\Hc_i$ is a $(d-2)$-"attractor decomposition" of 
  "subgame"~$\Gc \cap S_i$; 
\item
  $A_i$ is the Even "attractor" to $S_i$ in~$\Gc_i$;
\item
  $\Gc_{i+1} = \Gc_i \setminus A_i$;
\end{itemize}
and the game $\Gc_{k+1}$ is empty.
If $d = 0$ then we require that $k = 0$.

The following proposition states that if a "subgame" induced by a "trap"
for Odd has an Even "attractor decomposition" then the "trap" is an Even 
"dominion". 
Indeed, a routine proof argues that the union of all the reachability
strategies, implicit in the attractors listed in the decomposition, is
an Even "dominion" "strategy". 

\begin{proposition}
\label{prop:decomposition-dominion-even}
  If $d$ is even, $T$ is a "trap" for Odd in~$\Gc$, and there is
  an Even $d$-"attractor decomposition" of~$\Gc \cap T$, then $T$ is an
  Even "dominion" in~$\Gc$. 
\end{proposition}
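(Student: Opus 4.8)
The plan is to prove the statement by induction on the even number~$d$, in each case exhibiting an explicit Even dominion strategy~$\sigma$ on~$T$, that is, an Even strategy that traps Odd in~$T$ and in which every cycle of~$(T,\sigma)$ is even. For the base case $d=0$ the decomposition is $\seq{A}$ with $k=0$; since every priority in $\Gc\cap T$ is at most~$0$, every cycle is automatically even, and any Even strategy trapping Odd in~$T$ (one exists because $T$ is a trap for Odd in~$\Gc$) serves as~$\sigma$.

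For the inductive step ($d\ge 2$) I would first record that the attractors $A,A_1,\dots,A_k$ partition the vertices of $\Gc\cap T$, that $\Gc_i=(\Gc\cap T)\setminus(A\cup A_1\cup\cdots\cup A_{i-1})$, that $A=\Attr_{\Even}^{\Gc\cap T}(B)$ where $B$ is the set of priority-$d$ vertices, and that $A_i=\Attr_{\Even}^{\Gc_i}(S_i)$. To each region I attach a strategy: the attractor (reachability) strategy~$\alpha$ witnessing~$A$, the attractor strategy~$\rho_i$ witnessing~$A_i$, and---the only place the recursion enters---an Even dominion strategy~$\sigma_i$ on~$S_i$ in the subgame $\Gc\cap S_i$, obtained by applying the induction hypothesis to the $(d-2)$-attractor decomposition~$\Hc_i$ with ambient game $\Gc\cap S_i$ and trap~$S_i$ (trivially a trap for Odd). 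I then assemble~$\sigma$ by following~$\sigma_i$ on each~$S_i$, $\rho_i$ on each $A_i\setminus S_i$, $\alpha$ on $A\setminus B$, and any Even strategy trapping Odd in~$T$ on~$B$. Since the regions partition~$T$ and every chosen edge stays in~$T$, $\sigma$ is a well-defined Even strategy; and since $T$ is a trap for Odd in~$\Gc$ every edge out of an Odd vertex of~$T$ already stays in~$T$, so $\sigma$ traps Odd in~$T$.

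The crux is to show every cycle of~$(T,\sigma)$ is even, i.e.\ that the highest priority seen infinitely often on every infinite $\sigma$-play~$p$ in~$T$ is even. I would split on whether~$p$ meets~$B$ infinitely often. If it does, then priority~$d$---the maximum priority, and even---occurs infinitely often and we are done. If not, then after the last visit to~$B$ the play cannot lie in~$A$: the reachability strategy~$\alpha$ forces any play entering $A\setminus B$ back to~$B$, so after that point~$p$ stays inside $\Gc'\defeq(\Gc\cap T)\setminus A$. Within~$\Gc'$ I would track the index of the region $A_1,\dots,A_k$ containing the current vertex and show it never increases along~$p$: from~$A_i$, Even's move (via~$\rho_i$ or~$\sigma_i$) stays in~$A_i$, while an Odd move either stays in~$\Gc_i$ and hence in~$A_i$ (by the attractor property on $A_i\setminus S_i$ and the trap property of~$S_i$) or leaves~$\Gc_i$ for a strictly smaller-index region. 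A non-increasing sequence of indices stabilizes, after which~$p$ remains in a single~$A_{i^*}$; the attractor strategy~$\rho_{i^*}$ then forces~$p$ into~$S_{i^*}$, where~$\sigma_{i^*}$ traps Odd, so the tail of~$p$ is an infinite path of $(S_{i^*},\sigma_{i^*})$ and by the induction hypothesis its highest infinitely-often priority is even.

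I expect this cycle-evenness step to be the main obstacle, and within it the delicate point is that~$\alpha$, the~$\rho_i$ and the~$\sigma_i$ are strategies in induced subgames that omit edges of the full game~$\Gc$, so Odd may ``escape'' a subgame along edges invisible to those strategies. The resolution has two ingredients: first, isolating the top priority~$d$ by the case split so that a finitely-often occurrence confines the play to~$\Gc'$ (using that~$\alpha$ both keeps the play in~$A$ and forces it to~$B$); and second, the observation that every escape edge leads only to an already-removed, strictly smaller-index region, which is exactly what prevents the region index from increasing and so lets the play settle into a single~$S_{i^*}$ governed by the inductive dominion strategy.
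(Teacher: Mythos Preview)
Your proposal is correct and follows exactly the approach the paper sketches: the paper says only that ``a routine proof argues that the union of all the reachability strategies, implicit in the attractors listed in the decomposition, is an Even dominion strategy,'' and your argument is a careful unfolding of precisely that idea, with the inductive strategies~$\sigma_i$ simply being the recursively constructed unions of reachability strategies from the lower-level decompositions~$\Hc_i$. Your handling of the two delicate points (Odd's escapes to~$A$ and to lower-index regions) is sound.
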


By symmetry, the dual proposition holds for player Even,
assuming that $d$ is odd.

Attractor decompositions are
witnesses for the largest "dominia" and that the classic recursive
"McNaughton-Zielonka algorithm"
can be amended to produce such
witnesses. We provide the details of this claim in \Cref{sec:McNZ}.
Since "McNaughton-Zielonka algorithm" produces Even and Odd "attractor
decompositions", respectively, of "subgames" that are induced by sets of
vertices that are complements of each other, a by-product of its
analysis is a constructive proof of the positional determinacy theorem
(Theorem~\ref{thm:positional-determinacy}). 

\begin{theorem}
  "McNaughton-Zielonka algorithm" can be enhanced to produce both the
  largest Even and Odd "dominia", and an "attractor decomposition" of
  each.  
  Every vertex is in one of the two "dominia". 
\end{theorem}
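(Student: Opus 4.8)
The plan is to verify the inductive hypothesis/step framework that the excerpt has already laid out, making the induction on $(d, i)$ fully precise and then reading off the theorem as a corollary.

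The plan is to turn the inductive scaffolding sketched just before the theorem into a precise argument, and then to read off the three assertions of the theorem — existence of the two dominia, the fact that they partition $V$, and their maximality — from the invariants maintained by the enhanced procedures.

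First I would run the main induction on the priority bound~$d$, with an inner induction on the loop counter~$i$, to establish the four invariants listed in the excerpt: that $\Hc_i'$ and $\Hc_i$ returned by the recursive call are, respectively, an Odd $(d-1)$-attractor decomposition of $\Gc_i' \cap U_i$ and an Even $d$-attractor decomposition of $\Gc_i' \setminus U_i$ (these are exactly the outer inductive hypothesis applied to the recursive call $\text{McN-Z}_{\Odd}(\Gc_i', d-1)$), and that the sequences in~(\ref{eq:odd}) and~(\ref{eq:even}) are an Odd $(d+1)$-attractor decomposition of $\Gc \setminus \Gc_{i+1}$ and an Even $d$-attractor decomposition of $\Gc_{i+1}$. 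The base case $d = 0$ is immediate, since the procedure returns all of~$V^{\Gc}$ and the empty sequence is a vacuous $0$-attractor decomposition. In the inductive step the verifications are mostly bookkeeping against the definition of an attractor decomposition: one checks that each~$S_j$ is a non-empty trap for Odd in the appropriate subgame with priorities bounded by~$d-2$, that the stated attractor components really are the required attractors, and that the residual subgames chain up correctly through $\Gc_{j+1} = \Gc_j \setminus A_j$. The only tools needed are the ones highlighted earlier: trap transitivity, the fact that the complement of an attractor is a trap for the opponent, and monotonicity of attractors. One subtlety worth stating explicitly is that in a call of $\text{McN-Z}_{\Even}$ with $d$ even there are no vertices of priority $d+1$, so the leading component of~(\ref{eq:odd}) is the Odd attractor to the empty set and hence empty, matching the form required of an Odd $(d+1)$-attractor decomposition.

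Next I would dispatch termination and the passage from decompositions to dominia. The loop terminates because whenever $U_i \neq \emptyset$ the attractor $\Attr_{\Odd}^{\Gc_i}(U_i)$ is non-empty, so~$\Gc_{i+1}$ is strictly smaller than~$\Gc_i$, and the game is finite. At termination $U_i = \emptyset$ gives $\Gc_{i+1} = \Gc_i$, so the returned set~$V^{\Gc_i}$ is a trap for Odd equipped with the Even $d$-attractor decomposition~(\ref{eq:even}); by Proposition~\ref{prop:decomposition-dominion-even} it is an Even dominion. Symmetrically its complement is a trap for Even equipped with the Odd $(d+1)$-attractor decomposition~(\ref{eq:odd}), so by Proposition~\ref{prop:decomposition-dominion-odd} it is an Odd dominion. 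As these two sets are complementary, every vertex lies in one of them.

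Finally, maximality follows from this partition together with dominion disjointness, with no extra work. Writing~$W_{\Even}$ and~$W_{\Odd}$ for the two dominia just produced, let~$D$ be any Even dominion. By Proposition~\ref{prop:dominion-disjointness} it is disjoint from~$W_{\Odd}$, and since $W_{\Even} \cup W_{\Odd} = V$ this forces $D \subseteq W_{\Even}$; hence~$W_{\Even}$ is the largest Even dominion, and symmetrically~$W_{\Odd}$ is the largest Odd dominion. I expect the inductive step to be the main obstacle, and within it the careful check that the first component assembled in each decomposition is genuinely the stated attractor and that the trap conditions propagate through the nested subgames; everything else is either a direct appeal to the inductive hypothesis or a one-line consequence of the closure and disjointness propositions.
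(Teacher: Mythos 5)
Your proposal is correct and takes essentially the same approach as the paper: the paper's own justification is precisely this double induction on $d$ and the loop counter $i$, establishing that the enhanced procedures return attractor decompositions of the returned set and of its complement, with the detailed verifications declared well known and omitted. You fill in that skeleton faithfully (base case, termination, the passage from decompositions to dominia via the decomposition-implies-dominion propositions, and maximality from dominion disjointness plus the partition), so the argument matches the intended proof, only spelled out in more detail.
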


\section{Universal trees and algorithms}

The running time of the "McNaughton-Zielonka algorithm" is, up to a
small polynomial factor, determined by the number of recursive calls it makes overall.
While numerous experiments indicate that the algorithm performs very
well on some classes of random games and on games arising from
applications in model checking, temporal logic synthesis, and
equivalence checking~\cite{vDij18}, 
it is also well known that there are families of
"parity games" on which "McNaughton-Zielonka algorithm" performs
exponentially many recursive calls~\cite{Fri11r}.

Parys~\cite{Par19} has devised an ingenious modification of
Mc\-Naugh\-ton-Zielonka algorithm that reduced the number of recursive
calls of the algorithm to \AP""quasi-polynomial"" number $n^{O(\lg n)}$ in
the worst case. 
Lehtinen, Schewe, and Wojt\-czak~\cite{LSW19} have slightly modified
"Parys's algorithm" in order to improve the running time from 
$n^{O(\lg n)}$ down to $d^{O(\lg n)}$ for $(n, d)$-"small@@game" "parity games". 
They have also made an informal observation that the tree of recursive
calls of their recursive procedure is "universal". 

In this paper, we argue that "McNaughton-Zielonka algorithm",
"Parys's algorithm", and "Lehtinen-Schewe-Wojt\-czak" algorithm are special 
cases of what we call a 
"universal attractor decomposition algorithm". 
The "universal algorithm" is parameterized by two ordered trees and we
prove a striking structural result that if those trees are capacious
enough to embed (in a formal sense explained later) ordered trees that
describe the ``shape'' of some "attractor decompositions" of the
largest Even and Odd "dominia" in a "parity game", then the "universal
algorithm" correctly computes the two "dominia". 
It follows that if the algorithm is run on two "universal trees" then it
is correct, and indeed we reproduce "McNaughton-Zielonka", "Parys's", and 
Lehtinen-Schewe-Wojtczak algorithms by running the "universal algorithm"
on specific classes of "universal trees". 
In particular, "Lehtinen-Schewe-Wojtczak algorithm" is obtained by using
the succinct "universal trees" of Jur\-dzi\'n\-ski and Lazi\'c~\cite{JL17}, 
whose size nearly matches the "quasi-polynomial" lower bound on the
size of "universal trees"~\cite{CDFJLP19}.

\subsection{Universal ordered trees}
\label{section:universal-trees}

\paragraph*{Ordered trees.}
Ordered trees are defined inductively;
an ordered tree is the trivial tree $\AP\intro*\seq{}$ or a sequence  
$\reintro*\seq{\Tc_1, \Tc_2, \dots, \Tc_k}$, where $\Tc_i$ is an ordered 
tree for every $i = 1, 2, \dots, k$. For an ordered tree~$\Tc$, we denote its
\emph{number of leaves} by $\mathrm{leaves}(\Tc)$ and its \emph{height} by $\mathrm{height}(\Tc)$, with the convention that the height of the trivial tree is zero. Moreover, we denote by $\seq{\Tc}^n$ the ordered tree $\seq{\Tc_1,\hdots,\Tc_i}$ where
$\Tc_i$ is a copy of $\Tc$ for each $i = 1, 2, \dots, n$.

\paragraph*{Trees of "attractor decompositions".} 

The definition of an "attractor decomposition" is inductive and we
define an ordered tree that reflects the hierarchical structure of an
"attractor decomposition".
If $d$ is even and 
\[
\Hc = \seq{A, (S_1, \Hc_1, A_1), \dots, (S_k, \Hc_k, A_k)}
\] 
is an Even \AP$d$-"attractor decomposition" then we define the 
""tree of attractor decomposition""~$\Hc$, denoted by $\Tc_{\Hc}$, 
to be the trivial ordered tree~$\seq{}$ if $k = 0$, and otherwise, to
be the ordered tree  
$\seq{\Tc_{\Hc_1}, \Tc_{\Hc_2}, \dots, \Tc_{\Hc_k}}$, where for every
$i = 1, 2, \dots, k$, tree $\Tc_{\Hc_i}$ is the "tree of attractor 
decomposition"~$\Hc_i$.  
Trees of Odd "attractor decompositions" are defined analogously.

Observe that the sets $S_1, S_2, \dots, S_k$ in an "attractor
decomposition" as above are non-empty and pairwise disjoint, which
implies that trees of "attractor decompositions" are small relative to
the number of vertices and the number of distinct "priorities" in a
"parity game". 
More precisely, we say that an ordered tree is \AP$(n, h)$-""small@@tree""
if its height is at most~$h$ and it has at most $n$ leaves.  
The following proposition can be proved by routine structural
induction.  

\begin{proposition}
\label{prop:tree-of-decomposition-is-small}
  If $\Hc$ is an "attractor decomposition" of an $(n, d)$-"small@@game" "parity
  game" 
  then its tree $\Tc_{\Hc}$ is $(n, \lceil d/2 \rceil)$-"small@@tree".
\end{proposition}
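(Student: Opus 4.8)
The plan is to prove Proposition~\ref{prop:tree-of-decomposition-is-small} by structural induction on the attractor decomposition~$\Hc$, establishing simultaneously that $\mathrm{height}(\Tc_\Hc) \le \lceil d/2 \rceil$ and $\mathrm{leaves}(\Tc_\Hc) \le n$. By symmetry between the two players, I would treat the Even case with~$d$ even in detail and remark that the Odd case is identical with the roles of the players swapped. First I would set up the induction so that the hypothesis applies to each sub-decomposition $\Hc_i$, which by definition is a $(d-2)$-attractor decomposition of the subgame $\Gc \cap S_i$; since this subgame has priorities at most~$d-2$ and at most $|S_i|$ vertices, the inductive hypothesis gives that $\Tc_{\Hc_i}$ is $(|S_i|, \lceil (d-2)/2 \rceil)$-small, i.e.\ $\mathrm{height}(\Tc_{\Hc_i}) \le \lceil d/2 \rceil - 1$ and $\mathrm{leaves}(\Tc_{\Hc_i}) \le |S_i|$.

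For the base case, I would handle $d = 0$ (and the trivial tree more generally): when $k = 0$ the tree $\Tc_\Hc$ is the trivial tree $\seq{}$, which has height~$0 \le \lceil d/2 \rceil$ and exactly one leaf, and one leaf is at most~$n$ provided the game is nonempty, which I would note holds because every vertex has an outgoing edge so $n \ge 1$. For the inductive step with $k \ge 1$, the tree is $\seq{\Tc_{\Hc_1}, \dots, \Tc_{\Hc_k}}$. The height bound follows directly from the definition of height together with the inductive bound on each $\mathrm{height}(\Tc_{\Hc_i})$: we get $\mathrm{height}(\Tc_\Hc) = 1 + \max_i \mathrm{height}(\Tc_{\Hc_i}) \le 1 + (\lceil d/2 \rceil - 1) = \lceil d/2 \rceil$.

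The leaf bound is where the combinatorial content lies, and I expect it to be the main point to get right. By the definition of the number of leaves, $\mathrm{leaves}(\Tc_\Hc) = \sum_{i=1}^{k} \mathrm{leaves}(\Tc_{\Hc_i}) \le \sum_{i=1}^{k} |S_i|$, using the inductive hypothesis on each child. The crucial observation, already flagged in the text immediately before the proposition, is that the sets $S_1, \dots, S_k$ are nonempty and \emph{pairwise disjoint} subsets of~$V$, so $\sum_{i=1}^{k} |S_i| = |S_1 \cup \dots \cup S_k| \le |V| \le n$. Pairwise disjointness is what makes the sum telescope into a single cardinality bounded by~$n$ rather than merely by $k \cdot n$; I would therefore justify it carefully from the attractor decomposition definition. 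Concretely, each $S_i$ is a subset of $\Gc_i = \Gc_{i-1} \setminus A_{i-1}$, and $A_{i-1}$ is the Even attractor to $S_{i-1}$, which contains $S_{i-1}$; hence $S_i \cap S_j = \emptyset$ whenever $i \ne j$, since for $j > i$ the set $S_j$ lives in a subgame from which $\Attr_\Even^{\Gc_i}(S_i) \supseteq S_i$ has been removed.

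The hard part will be nothing conceptually deep---the proposition is genuinely routine, as the authors indicate---but the one step deserving care is verifying pairwise disjointness of the $S_i$ rigorously from the inductive construction of the decomposition, and making sure the base-case leaf count does not break when a game is empty; since the definition stipulates that each $S_i$ is \emph{non-empty} and that subgames have no dead ends, the only empty game arises as $\Gc_{k+1}$, which contributes no term, so every tree whose leaves we count corresponds to a nonempty game and the bound $n \ge 1$ is maintained throughout the recursion.
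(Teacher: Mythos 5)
Your proof is correct and follows exactly the route the paper intends: the paper gives no explicit proof, saying only that the proposition follows ``by routine structural induction'' from the observation (stated just before it) that the sets $S_1, \dots, S_k$ are non-empty and pairwise disjoint, and your argument is precisely that induction, with the disjointness of the $S_i$ doing the work for the leaf bound and the drop from $d$ to $d-2$ doing the work for the height bound. The only quibble is your base-case remark that ``every vertex has an outgoing edge so $n \ge 1$'' (an empty game satisfies that condition vacuously); but as you yourself note, non-emptiness of the $S_i$ means every subgame arising in the recursion is non-empty, so this degenerate case does not affect the argument.
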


\paragraph*{Embedding ordered trees.}

Intuitively, an ordered tree "embeds@@tree" another if the latter 
can be obtained from the former by pruning some subtrees. 
More formally, every ordered tree \AP""embeds@@tree"" the trivial tree~$\seq{}$,
and $\seq{\Tc_1, \Tc_2, \dots, \Tc_k}$ \reintro(tree){embeds}
$\seq{\Tc_1', \Tc_2', \dots, \Tc_\ell'}$ if there are indices 
$i_1, i_2, \dots, i_{\ell}$, such that 
$1 \leq i_1 < i_2 < \cdots < i_{\ell} \leq k$
and for every $j = 1, 2, \dots, \ell$, we have that $\Tc_{i_j}$
"embeds@@tree"~$\Tc_j'$.

\paragraph*{Universal ordered trees.}
We say that an ordered tree is \AP$(n, h)$-""universal""~\cite{CDFJLP19} if
it "embeds@@tree" every $(n, h)$-"small@@tree" ordered tree.
The complete $n$-ary tree of height~$h$ can be defined by induction
on~$h$: 
if $h=0$ then $\AP\intro*\complete{n, 0}$ is the trivial tree~$\seq{}$, and if $h>0$
then $\reintro*\complete{n, h}$ is the ordered tree $\seq{\complete{n, h-1}}^n$. 
The tree $\complete{n, h}$ is obviously $(n, h)$-"uni\-ver\-sal" but its size
is exponential in~$h$. 

We define two further classes $\parys{n, h}$ and $\succinct{n, h}$ of 
$(n, h)$-uni\-ver\-sal trees, introduced respectively by Parys~\cite{Par19}
and by Jurdzi\'nski and Lazi\'c~\cite{JL17},
whose size is only "quasi-polynomial", and hence
they are significantly smaller than the complete $n$-ary trees of
height~$h$.  
Both classes are defined by induction on $n+h$.

\AP If $h = 0$ then both $\intro*\parys{n, h}$ and $\intro*\succinct{n, h}$ are defined to be the
trivial tree~$\seq{}$. 
If $h>0$ then $\reintro*\parys{n, h}$ is defined to be the ordered tree
\[
  \seq{\parys{\lfloor n/2 \rfloor, h-1}}^{\lfloor n/2 \rfloor} \cdot 
    \seq{\parys{n, h-1}} \cdot 
    \seq{\parys{\lfloor n/2 \rfloor, h-1}}^{\lfloor n/2 \rfloor}\,, 
\]
and $\reintro*\succinct{n, h}$ is defined to be the ordered tree
\[
  \succinct{\lfloor n/2 \rfloor, h} \cdot \seq{\succinct{n, h-1}} \cdot 
    \succinct{\lfloor n/2 \rfloor, h}\,. 
\]
The following proposition can easily be proven
by induction on $(n,h)$.

\begin{proposition}
  Ordered trees $\complete{n, h}$, $\parys{n, h}$ and $\succinct{n, h}$ are 
  $(n, h)$-"universal". 
\end{proposition}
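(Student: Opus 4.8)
The plan is to prove the three universality claims by induction, treating the complete tree $C_{n,h}$ separately from the quasi-polynomial trees $P_{n,h}$ and $S_{n,h}$, which share a common ``leaf-splitting'' argument. For $C_{n,h}$ I would induct on $h$: a non-trivial $(n,h)$-small tree $\seq{T_1,\dots,T_k}$ has $k\le n$ children (each subtree contributes at least one leaf), and each $T_i$ is $(n,h-1)$-small, so by the inductive hypothesis each $T_i$ embeds into a copy of $C_{n,h-1}$; mapping $T_i$ to the $i$-th of the $n$ children of $C_{n,h}$ gives the embedding, and the trivial tree is embedded by everything, settling the base case $h=0$.

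For $P_{n,h}$ and $S_{n,h}$ the engine is the following splitting observation, which I would isolate as a lemma. Given the children $T_1,\dots,T_k$ of an $(n,h)$-small tree, with $L_i=\mathrm{leaves}(T_i)$ and $\sum_i L_i\le n$, let $m$ be the least index with $L_1+\cdots+L_m>\lfloor n/2\rfloor$ (take $m=k+1$ if no such index exists). Then the left forest $T_1,\dots,T_{m-1}$ has at most $\lfloor n/2\rfloor$ leaves in total, the right forest $T_{m+1},\dots,T_k$ has at most $n-\lfloor n/2\rfloor-1=\lceil n/2\rceil-1\le\lfloor n/2\rfloor$ leaves in total, while the middle tree $T_m$ is itself $(n,h-1)$-small. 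Verifying this arithmetic, in particular the bound $\lceil n/2\rceil-1\le\lfloor n/2\rfloor$ for all $n$, is the one routine but load-bearing computation.

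With the split in hand I would run a single induction on $n+h$. For $P_{n,h}$, whose children are $\lfloor n/2\rfloor$ copies of $P_{\lfloor n/2\rfloor,h-1}$, then one copy of $P_{n,h-1}$, then $\lfloor n/2\rfloor$ copies of $P_{\lfloor n/2\rfloor,h-1}$: I embed $T_m$ into the central $P_{n,h-1}$ by the inductive hypothesis; I embed each tree of the left forest, which is $(\lfloor n/2\rfloor,h-1)$-small and whose number is at most $\lfloor n/2\rfloor$, into its own copy in the left block, again by the inductive hypothesis; and symmetrically for the right forest and the right block. Since the images lie in the left block, then the central child, then the right block, the chosen child indices are increasing, which is exactly what the definition of embedding requires. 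For $S_{n,h}$, whose children are the children of $S_{\lfloor n/2\rfloor,h}$, then $S_{n,h-1}$, then the children of $S_{\lfloor n/2\rfloor,h}$, the argument is identical except that the left and right forests are embedded \emph{as forests}, by invoking that $S_{\lfloor n/2\rfloor,h}$ embeds the $(\lfloor n/2\rfloor,h)$-small trees $\seq{T_1,\dots,T_{m-1}}$ and $\seq{T_{m+1},\dots,T_k}$; the forests of children of these two occurrences of $S_{\lfloor n/2\rfloor,h}$ are precisely the left and right blocks of $S_{n,h}$.

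The main obstacle, and the reason the induction measure must be $n+h$ rather than $h$ alone, is the $S$ case: its left and right blocks $S_{\lfloor n/2\rfloor,h}$ recurse at the same height $h$, so progress is made only through the halving of $n$, and one must check that $\lfloor n/2\rfloor+h<n+h$ keeps the recursion well-founded, with $S_{0,h}=\seq{}$ as the bottom case so that $n=1$ reduces cleanly to the single central child. For $P$ the height strictly decreases in every recursive call, so height induction alone would in fact suffice; unifying both under induction on $n+h$ is the cleanest presentation.
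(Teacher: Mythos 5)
The paper gives no proof of this proposition at all---it is explicitly ``left as an instructive exercise to the reader''---so there is nothing internal to compare your argument against; judged on its own, your proof is correct and complete. Your leaf-splitting lemma is sound (the arithmetic $n-\lfloor n/2\rfloor-1=\lceil n/2\rceil-1\le\lfloor n/2\rfloor$ checks out, as does the count of at most $\lfloor n/2\rfloor$ trees per block), and you correctly identify the two delicate points: that the $S_{n,h}$ case must embed the left and right forests \emph{as forests} into the children of the two copies of $S_{\lfloor n/2\rfloor,h}$ (whose roots are merged into the root of $S_{n,h}$), and that this forces the induction measure $n+h$ rather than $h$, well-founded because $\lfloor n/2\rfloor<n$ for $n\geq 1$. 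This is essentially the canonical argument implicit in the cited works of Jurdzi\'nski--Lazi\'c, Parys, and Czerwi\'nski et al., so it is exactly the solution the exercise intends.
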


A proof of "universality" of~$\succinct{n, h}$ is implicit in the work of
Jurdzi\'nski and Lazi\'c~\cite{JL17}, whose 
\emph{succinct multi-counters} are merely an alternative presentation 
of trees~$\succinct{n, h}$.
Parys~\cite{Par19} has shown that the number of leaves in
trees $\parys{n, h}$ is $n^{\lg n + O(1)}$ and Jurdzi\'nski and
Lazi\'c~\cite{JL17} have proved that the number of leaves in
trees $\succinct{n, h}$ is~$n^{\lg h + O(1)}$. 
Czerwi\'nski et al.~\cite{CDFJLP19} have established a
"quasi-polynomial" lower bound on the number of leaves in
$(n, h)$-"universal trees", which the size of~$\succinct{n, h}$ exceeds only by
a small polynomial factor.

\subsection{Universal algorithm}
\label{sec:universal-algorithm}

Every call of "McNaugh\-ton-Zielonka algorithm" 
(Algorithm~\ref{algo:Z-Solve}) repeats the main loop 
until the set returned by a recursive call is empty.
If the number of iterations for each value of~$d$ is large then the
overall number of recursive calls may be exponential in~$d$ in the
worst case, and that is indeed what happens for some families of hard
"parity games"~\cite{Fri11r}.

\begin{algorithm}[htb]
  \DontPrintSemicolon
  \SetKwFunction{solveEven}{$\kl[\solveEven]{\text{Univ}_{\Even}}$}
  \SetKwFunction{solveOdd}{$\kl[\solveOdd]{\text{Univ}_{\Odd}}$}
  \SetKwProg{fun}{procedure}{:}{}
  \AP\fun{\intro*\solveEven{$\Gc, d, \Tc^{\Even}, \Tc^{\Odd}$}}{
      let $\Tc^{\Odd} =
        \seq{\Tc^{\Odd}_1, \Tc^{\Odd}_2, \dots, \Tc^{\Odd}_k}$\;
      $\Gc_1 \leftarrow \Gc$\;
      \For{$i \leftarrow 1$ \KwTo $k$}{
        $D_i \leftarrow \pi^{-1}(d) \cap \Gc_i$\;
        $\Gc_i' \leftarrow \Gc_i \setminus \Attr_{\Even}^{\Gc_i}(D_i)$\;
        $U_i \leftarrow \solveOdd\bigl(\Gc_i', d-1,   
          \Tc^{\Even}, \Tc^{\Odd}_i\bigr)$\;
        $\Gc_{i+1} \leftarrow \Gc_i \setminus \Attr^{\Gc_i}_{\Odd}(U_i)$\;
      }
    \Return{$V^{\Gc_{k+1}}$}
  }
  \fun{\intro*\solveOdd{$\Gc, d, \Tc^{\Even}, \Tc^{\Odd}$}}{
    let $\Tc^{\Even} =
      \seq{\Tc^{\Even}_1, \Tc^{\Even}_2, \dots, \Tc^{\Even}_\ell}$\;
    $\Gc_1 \leftarrow \Gc$\;
    \For{$i \leftarrow 1$ \KwTo $\ell$}{
      $D_i \leftarrow \pi^{-1}(d) \cap \Gc_i$\;
      $\Gc_i' \leftarrow \Gc_i \setminus \Attr_{\Odd}^{\Gc_i}(D_i)$\;
      $U_i \leftarrow \solveEven\bigl(\Gc_i', d-1,
        \Tc^{\Even}_i, \Tc^{\Odd}\bigr)$\;
      $\Gc_{i+1} \leftarrow \Gc_i \setminus \Attr^{\Gc_i}_{\Even}(U_i)$\;
    }
    \Return{$V^{\Gc_{\ell+1}}$}
  }
  \caption{
    \label{algo:universal-parity}
    The \emph{""universal attractor decomposition algorithm""}.}
\end{algorithm}

In our "universal attractor decomposition algorithm" 
(Algorithm~\ref{algo:universal-parity}), 
every iteration of the main loop performs exactly the same actions as
in "McNaughton-Zielonka algorithm" 
(see Algorithm~\ref{algo:Z-Solve} and Figure~\ref{figure:McN-Z-iter}), 
but the algorithm uses a different mechanism to determine how many
iterations of the main loop are performed in each recursive call.   
In the mutually recursive procedures $\solveOdd$ and $\solveEven$,
this is determined by the numbers of children of the root in the input
trees $\Tc^{\Even}$ (the third argument) and~$\Tc^{\Odd}$ 
(the fourth argument), respectively.  
Note that the sole recursive call of $\solveOdd$ in the $i$-th
iteration of the main loop in a call of $\solveEven$ is given 
subtree $\Tc_i^{\Odd}$ as its fourth argument and, analogously, the
sole recursive call of $\solveEven$ in the $j$-th iteration of the
main loop in a call of $\solveOdd$ is given subtree $\Tc_j^{\Even}$ as
its third argument. 

In order to characterise the tree of recursive calls, let us define the \AP""interleaving"" operation on two ordered trees
inductively as follows:  
\AP$\seq{} \intro*\interleaving \Tc = \seq{}$ and 
$\seq{\Tc_1, \Tc_2, \dots, \Tc_k} \reintro*\interleaving \Tc = 
  \seq{\Tc \interleaving \Tc_1, \Tc \interleaving \Tc_2, \dots, \Tc \interleaving \Tc_k}$. 
Then the following simple proposition provides an explicit description
of the tree of recursive calls of our "universal algorithm". We state it only for the case where $d$ is even, but a similar proposition holds when $d$ is odd if trees $\Tc^{\Even}$ and $\Tc^{\Odd}$ are swapped in the statement.

\begin{proposition}
\label{prop:tree-of-recursive-calls-even}
  If $d$ is even 
  then the tree of recursive calls to the   procedure
  $\solveEven\bigl(\Gc, d, \Tc^{\Even}, \Tc^{\Odd}\bigr)$
  is the "interleaving" $\Tc^{\Odd} \interleaving \Tc^{\Even}$
  of trees~$\Tc^{\Odd}$ and~$\Tc^{\Even}$. 
\end{proposition}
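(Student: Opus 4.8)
The plan is to prove the proposition by a simultaneous induction that also establishes the natural dual for the odd case: that the tree of recursive calls of a call $\text{Univ}_{\Odd}(\Gc, d, \Tc^{\Even}, \Tc^{\Odd})$, with $d$ odd, is the interleaving $\Tc^{\Even} \bowtie \Tc^{\Odd}$. Pairing the two statements is unavoidable because the procedures are mutually recursive: a call of $\text{Univ}_{\Even}$ spawns only calls of $\text{Univ}_{\Odd}$ and conversely, and the priority argument decreases by one at each level. Thus $d$ even in the even statement corresponds exactly to $d-1$ odd in the odd statement, so the two claims feed into each other consistently down the recursion.

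First I would fix a precise reading of the \emph{tree of recursive calls}: it is the ordered tree whose root represents the given top-level call, and whose ordered list of children consists of the trees of recursive calls of the direct recursive calls made, listed in the order the main loop issues them. Under this reading a call that makes no recursive call contributes the trivial tree $\seq{}$. I would then induct on the total number of nodes of the pair $(\Tc^{\Even}, \Tc^{\Odd})$. This measure strictly decreases at every recursive call: in $\text{Univ}_{\Even}$ the fourth-argument tree $\Tc^{\Odd}$ is replaced by one of its root's subtrees $\Tc^{\Odd}_i$ while the third argument $\Tc^{\Even}$ is unchanged (and symmetrically for $\text{Univ}_{\Odd}$), and a proper subtree has strictly fewer nodes since $|\Tc^{\Odd}| = 1 + \sum_j |\Tc^{\Odd}_j| > |\Tc^{\Odd}_i|$.

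For the base case, if $\Tc^{\Odd} = \seq{}$ then the \textbf{for}-loop of $\text{Univ}_{\Even}$ has zero iterations, no recursive call is made, and the tree of recursive calls is $\seq{}$, which matches $\seq{} \bowtie \Tc^{\Even} = \seq{}$ by the first defining clause of interleaving. For the inductive step, write $\Tc^{\Odd} = \seq{\Tc^{\Odd}_1, \dots, \Tc^{\Odd}_k}$ with $k \geq 1$; the loop runs exactly $k$ times, and its $i$-th iteration issues the single recursive call $\text{Univ}_{\Odd}(\Gc_i', d-1, \Tc^{\Even}, \Tc^{\Odd}_i)$. Applying the induction hypothesis for the odd statement to this call (its argument pair has strictly fewer nodes) gives its tree of recursive calls as $\Tc^{\Even} \bowtie \Tc^{\Odd}_i$. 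Assembling these in order yields $\seq{\Tc^{\Even} \bowtie \Tc^{\Odd}_1, \dots, \Tc^{\Even} \bowtie \Tc^{\Odd}_k}$, which is precisely $\Tc^{\Odd} \bowtie \Tc^{\Even}$ by the second defining clause of interleaving. The odd statement is handled by the same argument with the roles of the two trees and procedures swapped.

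I do not expect a genuine obstacle; the only real work is bookkeeping. The point that most needs care is that the first operand of $\bowtie$ must be the tree controlling the loop length -- $\Tc^{\Odd}$ in $\text{Univ}_{\Even}$ and $\Tc^{\Even}$ in $\text{Univ}_{\Odd}$ -- so that the peeling performed by interleaving mirrors the peeling performed by the algorithm; keeping the third and fourth arguments straight as they swap roles across the two procedures is essentially the entire content of the proof. A secondary check is to confirm that the value and parity of $d$ affect only which procedure is invoked, and never the number of loop iterations, so that $d$ contributes nothing beyond selecting which of the two paired statements to apply.
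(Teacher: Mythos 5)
Your proof is correct. The paper itself offers no proof of this proposition --- it is stated as a ``simple proposition'' immediately after the definition of interleaving --- and your mutual structural induction (pairing the even statement with its odd dual, with each loop iteration peeling off one child of the controlling tree and the node-count measure guaranteeing well-foundedness) is exactly the routine argument the authors left implicit; the one bookkeeping point you rightly flag, that the first operand of $\bowtie$ must be the tree governing the loop length ($\Tc^{\Odd}$ in $\text{Univ}_{\Even}$, $\Tc^{\Even}$ in $\text{Univ}_{\Odd}$), is indeed the only place where the argument could go astray.
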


The following elementary proposition helps estimate the size of an 
"interleaving" of two ordered trees and hence the running time of a call
of the "universal algorithm" that is given two ordered trees as inputs. 

\begin{proposition}
\label{prop:size-of-interleaving}

  If $\Tc$ and $\Tc'$ are ordered trees then:
  \begin{itemize}
  \item
    $\mathrm{height}(\Tc \interleaving \Tc') \leq 
      \mathrm{height}(\Tc) + \mathrm{height}(\Tc')$;
  \item
    $\mathrm{leaves}(\Tc \interleaving \Tc') \leq 
      \mathrm{leaves}(\Tc) \cdot \mathrm{leaves}(\Tc')$.
  \end{itemize}
\end{proposition}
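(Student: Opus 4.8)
The plan is to prove both inequalities simultaneously by structural induction, using as the induction measure the sum $\mathrm{height}(\Tc) + \mathrm{height}(\Tc')$ of the heights of the two input trees. The key observation motivating this choice is that the interleaving operation swaps the roles of its two arguments at each recursive step: unfolding $\seq{\Tc_1, \dots, \Tc_k} \bowtie \Tc'$ produces the children $\Tc' \bowtie \Tc_1, \dots, \Tc' \bowtie \Tc_k$, in which the second argument $\Tc'$ has moved into first position. Consequently a naive structural induction on the first argument alone fails, since its height may increase across a recursive call; the combined measure repairs this, because each $\Tc_i$ is a proper subtree of $\Tc$, so $\mathrm{height}(\Tc_i) < \mathrm{height}(\Tc)$ and the sum therefore strictly decreases when passing from $(\Tc, \Tc')$ to $(\Tc', \Tc_i)$.

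For the base case I would take $\Tc = \seq{}$, where by definition $\Tc \bowtie \Tc' = \seq{}$ has height $0$ and one leaf; the height bound then reads $0 \leq \mathrm{height}(\Tc')$ and the leaves bound reads $1 \leq \mathrm{leaves}(\Tc')$, both of which hold since every ordered tree has at least one leaf. For the inductive step I would write $\Tc = \seq{\Tc_1, \dots, \Tc_k}$ with $k \geq 1$, so that $\Tc \bowtie \Tc' = \seq{\Tc' \bowtie \Tc_1, \dots, \Tc' \bowtie \Tc_k}$, and apply the induction hypothesis to each pair $(\Tc', \Tc_i)$, whose height sum is strictly smaller than that of $(\Tc, \Tc')$.

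For the height bound this gives $\mathrm{height}(\Tc \bowtie \Tc') = 1 + \max_i \mathrm{height}(\Tc' \bowtie \Tc_i) \leq 1 + \max_i \bigl(\mathrm{height}(\Tc') + \mathrm{height}(\Tc_i)\bigr)$, and pulling the common term $\mathrm{height}(\Tc')$ out of the maximum leaves $1 + \mathrm{height}(\Tc') + (\mathrm{height}(\Tc) - 1) = \mathrm{height}(\Tc) + \mathrm{height}(\Tc')$, as required. For the leaves bound, summing over the children gives $\mathrm{leaves}(\Tc \bowtie \Tc') = \sum_i \mathrm{leaves}(\Tc' \bowtie \Tc_i) \leq \sum_i \mathrm{leaves}(\Tc') \cdot \mathrm{leaves}(\Tc_i) = \mathrm{leaves}(\Tc') \cdot \mathrm{leaves}(\Tc)$, where the last equality is just the recursive definition of the number of leaves.

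I anticipate no serious obstacle: the only point requiring care is precisely the choice of induction measure, forced by the fact that interleaving alternates its two arguments level by level, so that neither single-argument induction is available. Once the sum of heights (equivalently, the total number of nodes) is adopted as the measure, both parts follow by the routine manipulations sketched above.
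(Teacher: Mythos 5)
Your proof is correct. The paper states this proposition without proof (it is offered as "elementary"), and your simultaneous induction on the pair of trees, with the sum of heights as the decreasing measure to handle the argument-swapping in $\seq{\Tc_1, \dots, \Tc_k} \bowtie \Tc' = \seq{\Tc' \bowtie \Tc_1, \dots, \Tc' \bowtie \Tc_k}$, is precisely the routine argument intended; both the height and leaf-count calculations go through as you wrote them. One cosmetic remark: the sum of heights and the total number of nodes are not \emph{equivalent} measures (they differ as functions of the pair of trees), but either one is a valid well-founded measure here, so this parenthetical slip has no effect on the proof.
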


In contrast to the "universal algorithm", the tree of recursive calls of
"McNaugh\-ton-Zielonka algorithm" is not pre-determined by a structure
separate from the game graph, such as the pair of trees~$\Tc^{\Even}$
and~$\Tc^{\Odd}$.
Instead, "McNaugh\-ton-Zielonka algorithm" determines the number of
iterations of its main loop adaptively, using the \AP""adaptive 
empty-set early termination rule"":
terminate the main loop as soon as $U_i = \emptyset$. 
We argue that if we add the "empty-set early termination rule" to 
the "universal algorithm" in which both trees $\Tc^{\Even}$ and
$\Tc^{\Odd}$ are the tree~$\complete{n, d/2}$ then its behaviour coincides
with "McNaughton-Zielonka algorithm".

\begin{proposition}
  The "universal algorithm" performs the same actions and produces the
  same output as "McNaugh\-ton-Zielonka algorithm" if it is run on an
  $(n, d)$-"small@@game" "parity game"
  and with both trees $\Tc^{\Even}$ and $\Tc^{\Odd}$ equal 
  to~$\complete{n, d/2}$, and if it uses the "adaptive empty-set early
  termination rule".   
\end{proposition}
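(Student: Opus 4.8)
The plan is to prove the statement by induction on the priority bound $d$, establishing that running the universal algorithm on the complete trees $C_{n,d/2}$ with the empty-set early termination rule produces exactly the same sequence of attractor and subgame computations as McNaughton-Zielonka. The key observation that drives the whole argument is the recursive structure of the complete tree: since $C_{n,h} = \seq{C_{n,h-1}}^n$, the root of $C_{n,d/2}$ has exactly $n$ children, each of which is a copy of $C_{n,d/2-1}$. Thus a call $\solveEven(\Gc, d, C_{n,d/2}, C_{n,d/2})$ reads its fourth argument $\Tc^{\Odd} = C_{n,d/2}$ as a sequence of $n$ identical subtrees $\Tc^{\Odd}_i = C_{n,d/2-1}$, so the \textbf{for}-loop is poised to run up to $n$ times, and each recursive call $\solveOdd(\Gc_i', d-1, C_{n,d/2}, C_{n,d/2-1})$ is itself a call of the universal algorithm on complete trees of the appropriate heights.

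First I would make precise what ``performs the same actions'' means: I would argue that the two algorithms, when started on the same subgame with the same priority bound, compute identical sequences $D_i$, $\Gc_i'$, $U_i$, and $\Gc_{i+1}$, and hence return the same vertex set. The base case is $d = 0$, where $\ZsolveEven$ returns $V^\Gc$ immediately; I would check that the universal algorithm on $C_{n,0} = \seq{}$ also returns $V^\Gc$ without entering the loop (the root has no children, so $\Gc_{k+1} = \Gc_1 = \Gc$), matching the explicit $d=0$ base case. For the inductive step, I would compare the bodies of the two loops line by line: both set $D_i \leftarrow \pi^{-1}(d) \cap \Gc_i$, both form $\Gc_i'$ by removing the Even attractor of $D_i$, both obtain $U_i$ from a recursive call on $(\Gc_i', d-1)$, and both update $\Gc_{i+1}$ by removing the Odd attractor of $U_i$. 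By the induction hypothesis, the recursive call $\solveOdd(\Gc_i', d-1, C_{n,d/2}, C_{n,d/2-1})$ returns the same set as $\ZsolveOdd(\Gc_i', d-1)$, so the per-iteration computations agree as long as both loops perform the same iterations.

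The crux is therefore matching the loop-control mechanisms. McNaughton-Zielonka repeats until $U_i = \emptyset$; the universal algorithm runs the \textbf{for}-loop up to $k = n$ times but, with the added empty-set early termination rule, also stops as soon as $U_i = \emptyset$. So the two stopping conditions coincide exactly, provided neither the bound $k = n$ nor the McNaughton-Zielonka loop ever forces more iterations than the other allows. I would argue this using disjointness of the sets $U_1, U_2, \dots$: as noted in the discussion preceding the universal algorithm, the returned sets across distinct iterations are pairwise disjoint subsets of the $n$-vertex game, so a non-empty $U_i$ can occur at most $n$ times before necessarily producing $U_i = \emptyset$. Hence McNaughton-Zielonka terminates within $n$ iterations, and the $k = n$ children of $C_{n,d/2}$ are always sufficient; the early termination rule fires at precisely the same iteration in both algorithms. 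The main obstacle I anticipate is bookkeeping the precise alignment between the unbounded \textbf{repeat-until} of McNaughton-Zielonka and the bounded \textbf{for}-loop of the universal algorithm. I would handle it by showing that the early termination rule makes the loop-body executions identical up to and including the first iteration with $U_i = \emptyset$, and that the bound $k = n$ is never the binding constraint because termination by the empty set always occurs first. Everything else reduces to the routine line-by-line comparison, and the conclusion that both procedures return the same vertex set follows from the agreement of the final subgame $\Gc_{i+1}$.
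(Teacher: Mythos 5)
Your overall strategy---mutual induction over the recursion, line-by-line identification of the two loop bodies, and an argument that the two loop-control mechanisms select the same iterations---is the natural one, and it is essentially the only argument available (the paper states this proposition without proof). However, there is a genuine gap at the crux step, namely the claim that ``the bound $k = n$ is never the binding constraint because termination by the empty set always occurs first.'' Disjointness of the non-empty sets $U_1, U_2, \dots$ gives that McNaughton-Zielonka terminates within $n+1$ iterations, not $n$: when every one of the first $n$ iterations returns a non-empty $U_i$, each such iteration removes at least one vertex, so $\Gc_{n+1}$ is empty, and the \textbf{repeat-until} loop still needs an $(n+1)$-st iteration---executed on this empty subgame, with a full chain of recursive calls down to the base case---in order to witness $U_{n+1} = \emptyset$ and stop. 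The universal algorithm on $C_{n, d/2}$ cannot perform that iteration: its \textbf{for}-loop has exhausted the $n$ children of the root, and the early-termination rule never fires. A one-vertex game shows this really happens: take $V = \{v\}$ with a self-loop, $\pi(v) = 1$, $d = 2$, $n = 1$. Then $\ZsolveEven(\Gc, 2)$ runs two iterations (the first returns $U_1 = \{v\}$, the second runs on the empty game and returns $U_2 = \emptyset$), whereas $\solveEven\left(\Gc, 2, C_{1,1}, C_{1,1}\right)$ runs exactly one. The outputs agree (both are $\emptyset$), but the executions do not literally coincide, so the alignment mechanism you assert fails exactly where it is needed.

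The repair is to split the alignment argument into two cases and weaken the statement carried through the induction. If some iteration $i \leq n$ yields $U_i = \emptyset$, both algorithms stop at iteration~$i$ and the coincidence is exact. Otherwise all $n$ iterations yield non-empty, pairwise disjoint sets $U_i$, so $\Gc_{n+1} = \emptyset$; both algorithms then return~$\emptyset$, and McNaughton-Zielonka's surplus work consists exclusively of operations on empty subgames (every set computed is empty and every recursive call returns~$\emptyset$). Since the same discrepancy can arise inside recursive calls, the inductive hypothesis should be: the universal algorithm's execution is obtained from McNaughton-Zielonka's by deleting, at every level of the recursion, trailing iterations performed on empty subgames, and the two return the same set---this is the reading of ``performs the same actions'' under which the proposition holds, whereas the stronger claim your proof relies on is false. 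Two further minor points: the induction hypothesis must be stated for pairs of complete trees of heights $\lceil j/2 \rceil$ and $\lfloor j/2 \rfloor$ (your recursive calls pass $C_{n, d/2}$ together with $C_{n, d/2-1}$, not two equal trees), which you gesture at but do not formalize; and the bound should be quoted as ``at most $n+1$ iterations,'' since ``within $n$ iterations'' is precisely the off-by-one error that hides the edge case.
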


The idea of using rules for implicitly pruning the tree of recursive
calls of a "McNaughton-Zielonka"-style algorithm that are significantly
different from the "adaptive empty-set early termination rule" is due to 
Parys~\cite{Par19}.
In this way, he has designed the first "McNaughton-Zielonka"-style
algorithm that works in "quasi-polynomial" time~$n^{O(\lg n)}$ in the
worst case, and Lehtinen, Schewe, and Wojtczak~\cite{LSW19} have
refined "Parys's algorithm", improving the worst-case running time down   
to~$n^{O(\lg d)}$.  
Both algorithms use two numerical arguments 
(one for $\Even$ and one for~$\Odd$) 
and ``halving tricks'' on those parameters, which results in pruning
the tree of recursive calls down to "quasi-polynomial" size in the worst
case. 
We note that our "universal algorithm" yields the algorithms of "Parys"
and of Lehtinen et al., respectively, if, when run on an 
$(n, d)$-"small@@game" "parity game" 
and if both trees $\Tc^{\Even}$ and $\Tc^{\Odd}$ set to be the 
$(n, d/2)$-"universal trees" $\parys{n, d/2}$ and~$\succinct{n, d/2}$, respectively.  

\begin{proposition}
\label{prop:universal-LSW}
  The "universal algorithm" performs the same actions and produces the
  same output as "Lehtinen-Schewe-Wojtczak algorithm" if it is run on an
  $(n, d)$-"small@@game" "parity game"
  with both trees $\Tc^{\Even}$ and $\Tc^{\Odd}$ equal to~$\succinct{n, d/2}$.   
\end{proposition}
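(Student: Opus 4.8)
The plan is to prove the proposition by a simultaneous induction on $n + d$, matching the universal algorithm call $\mathrm{Univ}_{\Even}(\Gc, d, S_{n, d/2}, S_{n, d/2})$ against the corresponding top-level call of the Lehtinen-Schewe-Wojtczak algorithm and showing that the two perform an identical sequence of attractor computations, recursive calls, and subgame updates. Since every iteration of the main loop of Algorithm~\ref{algo:universal-parity} already performs exactly the same per-iteration actions as McNaughton-Zielonka (and hence as any McNaughton-Zielonka-style algorithm, including Lehtinen-Schewe-Wojtczak), the entire content of the proposition reduces to showing that the two algorithms agree on \emph{how many} iterations each call performs and on \emph{which parameters} each recursive call receives.

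First I would recall the precise recursive structure of the Lehtinen-Schewe-Wojtczak algorithm. It is a McNaughton-Zielonka-style recursion carrying, in addition to the game and the priority bound~$d$, two numerical \emph{budget} parameters, one for each player, and its defining feature is that, rather than looping until $U_i = \emptyset$, it performs a fixed number of iterations dictated by the active player's budget: in all but one central iteration the budget is halved, and in the single central iteration the budget is retained while the opponent's precision is reduced by one. The crux is to identify this pattern, read off at the level of the root's children, with the defining recurrence of the succinct universal trees,
\[
  S_{n, h} \: = \: S_{\lfloor n/2 \rfloor, h} \cdot \seq{S_{n, h-1}} \cdot S_{\lfloor n/2 \rfloor, h}\,.
\]

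The key structural observation, which I would isolate as the heart of the induction, is the following. In a call $\mathrm{Univ}_{\Even}(\Gc, d, \Tc^{\Even}, \Tc^{\Odd})$ with $\Tc^{\Odd} = S_{n, d/2}$, the number of main-loop iterations equals the number of children of the root of $S_{n, d/2}$, and the $i$-th iteration passes the $i$-th such child---which, by a routine unfolding of the recurrence, is necessarily of the form $S_{m_i, d/2 - 1}$---as the fourth argument to its recursive call of $\mathrm{Univ}_{\Odd}$. Unfolding the recurrence once, the list of root-children of $S_{n, d/2}$ is the concatenation of the root-children of $S_{\lfloor n/2 \rfloor, d/2}$, then the single subtree $S_{n, d/2 - 1}$, then the root-children of $S_{\lfloor n/2 \rfloor, d/2}$ again. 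This is precisely the Lehtinen-Schewe-Wojtczak pattern: a block of iterations governed by the halved budget~$\lfloor n/2 \rfloor$, one central iteration at the full budget~$n$ but reduced precision (captured by the drop from $S_{n, d/2}$ to $S_{n, d/2-1}$), and a second halved-budget block. I would therefore set up the correspondence ``the universal call whose active tree is $S_{m, h}$ behaves as the Lehtinen-Schewe-Wojtczak call with budget~$m$ and precision~$h$'', verify the base cases $d = 0$ and $h = 0$ (where $S_{n, 0} = \seq{}$ forces zero iterations, matching both the termination of the Lehtinen-Schewe-Wojtczak recursion at precision zero and the $d = 0$ base of Algorithm~\ref{algo:Z-Solve}), and apply the induction hypothesis separately to each of the three concatenated blocks.

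The main obstacle I anticipate is the exact bookkeeping of the two numerical parameters of the Lehtinen-Schewe-Wojtczak algorithm against the two tree arguments of the universal algorithm. Because the player whose budget is halved alternates with each level of recursion while the other player's budget is carried unchanged, I must check that this alternation matches the way Algorithm~\ref{algo:universal-parity} descends into $\Tc^{\Odd}$ in $\mathrm{Univ}_{\Even}$ but leaves $\Tc^{\Even}$ intact (and symmetrically in $\mathrm{Univ}_{\Odd}$), consistently with the interleaving described in Propositions~\ref{prop:tree-of-recursive-calls-even} and~\ref{prop:tree-of-recursive-calls-odd}. In particular I must confirm that initializing both trees to $S_{n, d/2}$ corresponds to initializing both Lehtinen-Schewe-Wojtczak budgets to~$n$ at precision~$d/2$, and I must track the off-by-one relationship between a single decrement of~$d$ in one recursive call, the halving $n \mapsto \lfloor n/2 \rfloor$ of the width, and the precision decrement $h \mapsto h-1$ that occurs only on the central child. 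Once this correspondence is pinned down, the remainder is a routine simultaneous induction.
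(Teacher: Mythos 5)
Your plan is essentially the right one, and it is worth noting that the paper itself offers no proof of this proposition: it is stated as an observation from inspecting the two pieces of pseudocode, and the paper's correctness claim for the Lehtinen-Schewe-Wojtczak algorithm is then derived not from this syntactic correspondence alone but from Theorem~\ref{thm:correctness-of-universal-algorithm} together with the $(n,d/2)$-universality of~$S_{n,d/2}$. So your proposal supplies an argument the paper leaves implicit, and its heart is correct: reading the recurrence $S_{n,h} = S_{\lfloor n/2\rfloor, h}\cdot\seq{S_{n,h-1}}\cdot S_{\lfloor n/2\rfloor,h}$ as ``two same-level, half-precision computations flanking one full-precision computation one priority level down,'' with concatenation of forests corresponding to sequential inlining of the same-level calls, is exactly what makes the universal algorithm on $S$-trees coincide with Lehtinen-Schewe-Wojtczak; the three-block induction, together with the (tacit but true) observation that a contiguous block of loop iterations depends only on the current subgame and on the block's list of child trees, is the correct mechanism.

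One genuine imprecision must be fixed before the write-up is sound. Your opening description of the Lehtinen-Schewe-Wojtczak algorithm---``it performs a fixed number of iterations dictated by the active player's budget: in all but one central iteration the budget is halved''---is not that algorithm; it is the non-adaptive version of Parys's algorithm, whose tree satisfies $P_{n,h} = \seq{P_{\lfloor n/2\rfloor,h-1}}^{\lfloor n/2\rfloor}\cdot\seq{P_{n,h-1}}\cdot\seq{P_{\lfloor n/2\rfloor,h-1}}^{\lfloor n/2\rfloor}$ and which is the subject of Proposition~\ref{prop:universal-non-adaptive-Parys}, not of the present one. In Lehtinen-Schewe-Wojtczak the flanking halved-precision computations are recursive calls at the \emph{same} priority level, not a block of $\lfloor n/2\rfloor$ iterations each calling one level down; consequently, in the flattened view the non-central level-$(d-1)$ calls receive precisions following the nested ``ruler'' pattern $1,2,1,4,1,2,1,\dots$ rather than uniformly $\lfloor n/2\rfloor$. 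Had you carried out the induction against your first-paragraph description, you would have proved the Parys correspondence instead, and the $S$-tree matching would have failed. Your second paragraph does use the correct nested reading, so the proof goes through, but the final version needs to fix one precise statement of the Lehtinen-Schewe-Wojtczak pseudocode at the outset and stick to it; the base cases ($S_{0,h}$ the empty forest and $S_{1,h}$ a path, versus the precision-$\leq 1$ and bottom-priority cases of their procedure) are where literal action-by-action equality is most fragile, so the off-by-one bookkeeping you defer is precisely what remains to be done there.
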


The correspondence between the "universal algorithm" executed on 
$(n, d/2)$-"universal trees" $\parys{n, d/2}$ and "Parys's algorithm" is a bit 
more subtle.  
While both run in quasi-polynomial time in the worst case, the former
may perform more recursive calls than the latter.
The two coincide, however, if the the former is enhanced with a
simple adaptive tree-pruning rule similar to the "empty-set early
termination rule".  
The discussion of this and other adaptive tree-pruning 
rules will be better informed once we have dicussed
sufficient conditions for the correctness of our "universal algorithm". 
Therefore, we will return to elaborating the full meaning of the
following proposition in  
Section~\ref{section:early-termination-heuristics}. 

\begin{proposition}
\label{prop:universal-non-adaptive-Parys}
  The "universal algorithm" performs the same actions and produces the
  same output as a non-"adaptive" version of "Parys's algorithm" if it is
  run on an $(n, d)$-"small@@game" "parity games" 
  with both trees $\Tc^{\Even}$ and $\Tc^{\Odd}$ equal to~$\parys{n, d/2}$. 
\end{proposition}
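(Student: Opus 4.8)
The plan is to pin down a precise recursive description of the non-adaptive version of Parys's algorithm and then to show, by induction on the priority bound~$d$, that it coincides call-for-call with the universal algorithm when both input trees are~$P_{n, d/2}$. Parys's algorithm is most naturally presented as a recursion carrying, besides the subgame and the priority bound, two numerical \emph{precision} parameters, one for each player. In its non-adaptive form the procedure for the player owning the current highest priority runs its main loop a fixed number of times dictated by the \emph{opponent's} precision parameter, halving that parameter for most recursive calls while passing it at full value to exactly one of them, following the characteristic ``halving trick'': with opponent precision~$m$, it performs $\lfloor m/2 \rfloor$ iterations at halved precision, then one iteration at full precision, then a further $\lfloor m/2 \rfloor$ iterations at halved precision.

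First I would observe that this iteration schedule is exactly the branching pattern recorded in the definition of~$P_{m, h}$. The root of~$P_{m, h}$ has $2\lfloor m/2 \rfloor + 1$ children whose associated subtrees are, in order, $\lfloor m/2 \rfloor$ copies of $P_{\lfloor m/2 \rfloor, h-1}$, then a single $P_{m, h-1}$, then $\lfloor m/2 \rfloor$ further copies of $P_{\lfloor m/2 \rfloor, h-1}$. Reading off Algorithm~\ref{algo:universal-parity}, in a call whose controlling tree equals~$P_{m, h}$ the main loop runs once per child of the root, and the $i$-th iteration hands the $i$-th subtree to the corresponding recursive call. Thus the universal algorithm performs $2\lfloor m/2 \rfloor + 1$ iterations, and the recursive calls receive precision-$\lfloor m/2 \rfloor$ trees except for the middle one, which receives a precision-$m$ tree---precisely matching Parys's schedule, with the first index of~$P$ playing the role of the precision parameter and the second index the role of the (halved) priority bound.

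With this dictionary in hand the induction is routine. The base case $d = 0$ is immediate, since $P_{n, 0} = \seq{}$ forces zero iterations in both algorithms, so each returns $V^{\Gc}$. For the inductive step I would invoke the fact---visible in Algorithm~\ref{algo:universal-parity} and captured by Propositions~\ref{prop:tree-of-recursive-calls-even} and~\ref{prop:tree-of-recursive-calls-odd}---that exactly one of the two trees controls the loop at each recursion depth while the other is passed along unchanged, with the controlling tree alternating between $\Tc^{\Odd}$ and~$\Tc^{\Even}$ as $d$ decreases by one. This alternation mirrors the swap of the two precision parameters between successive levels of Parys's recursion. Because the per-iteration actions---computing $D_i$, the two attractors $\Attr_{\Even}^{\Gc_i}(D_i)$ and $\Attr_{\Odd}^{\Gc_i}(U_i)$, and the restricted subgames $\Gc_i'$ and $\Gc_{i+1}$---are literally identical in the two algorithms and depend only on the game, not on the pruning mechanism, matching the number of iterations and the precision parameters handed to each recursive call is enough to force the two executions to agree step by step and to return the same set.

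The main obstacle is not the induction but fixing a faithful formalization of ``the non-adaptive version of Parys's algorithm'' against which to compare. Parys's published procedure interleaves the halving of parameters with an adaptive early-stopping rule, so the real work is to excise that stopping rule and verify that what remains is exactly a two-parameter recursion whose fixed iteration counts obey the halving pattern above. Once the two precision parameters are correctly identified with the first indices of the two~$P$ trees and their alternation is tracked carefully through the recursion, the correspondence is forced.
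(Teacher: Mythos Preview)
Your proposal is correct and matches the spirit of the paper's treatment. In fact the paper does not give a formal proof of this proposition at all: it states it as an observation and then, in Section~\ref{section:early-termination-heuristics}, clarifies what ``non-adaptive'' means by describing the adaptive empty-set rule that Parys's published algorithm applies within each half-block of $\lfloor n/2 \rfloor$ iterations; the non-adaptive version is simply Parys's procedure with that rule excised, so that each block runs its full $\lfloor n/2 \rfloor$ iterations. Your inductive argument---identifying the two precision parameters with the first indices of the two $P$-trees, matching the $\lfloor m/2 \rfloor + 1 + \lfloor m/2 \rfloor$ iteration schedule to the children of the root of~$P_{m,h}$, and observing that the per-iteration bookkeeping is literally the same in both algorithms---is exactly the natural formalization of that informal correspondence, and the paper would accept it as the intended proof.
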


\section{Correctness via structural theorems}
\label{sec:correctness-via-structural-theorems}

The classical proof of the correctness of "McNaughton-Zielonka algorithm"~\cite{AG11} essentially relies on claim that when one reaches the "empty-set condition",
then this proves that we've precisely computed the opponent's winning region.
The argument breaks down if the loop terminates before that
"empty-set condition" obtains. 
Instead, Parys~\cite{Par19} has developed a novel 
\emph{dominion separation technique} to prove correctness of his
algorithm and Lehtinen et al.~\cite{LSW19} use the same technique to
justify theirs.  

In this paper, we significantly generalize the "dominion separation"
technique of Parys, which allows us to intimately link the correctness
of our meta-algorithm to shapes (modelled as ordered trees) of
"attractor decompositions" of largest Even and Odd "dominia". 
We say that the "universal algorithm" is correct on a "parity game" if
$\solveEven$ returns the largest Even "dominion" and $\solveOdd$ returns
the largest Odd "dominion". 
We also say that an ordered tree~$\Tc$ ""embeds@@dominion"" a "dominion"~$D$ in
a "parity game"~$\Gc$ if it "embeds@@tree" the tree of some "attractor 
decomposition" of~$\Gc \cap D$. 
The main technical result we aim to prove in this section is the
sufficiency of the following condition for the "universal algorithm" to
be correct.

\begin{theorem}[Correctness of universal algorithm]
\label{thm:correctness-of-universal-algorithm}
  The "universal algorithm" is correct on a "parity game"~$\Gc$ if it is
  run on ordered trees~$\Tc^{\Even}$ and~$\Tc^{\Odd}$, such that
  $\Tc^{\Even}$ "embeds@@dominion" the largest Even "dominion" in~$\Gc$ and
  $\Tc^{\Odd}$ "embeds@@dominion" the largest Odd "dominion" in~$\Gc$. 
\end{theorem}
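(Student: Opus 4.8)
The plan is to derive exact correctness from two complementary separation properties that are dual to each other, glued together by positional determinacy (Theorem~\ref{thm:positional-determinacy}). Writing $W_\Even$ and $W_\Odd$ for the largest Even and Odd dominia, and $R$ for the set returned by a call $\mathrm{Univ}_\Even(\Gc, d, \Tc^{\Even}, \Tc^{\Odd})$ with $d$ even, I would prove: \textbf{(i)} if $\Tc^{\Odd}$ embeds $W_\Odd$ then $R \cap W_\Odd = \emptyset$; and \textbf{(ii)} if $\Tc^{\Even}$ embeds $W_\Even$ then $W_\Even \subseteq R$, together with the symmetric statements for $\mathrm{Univ}_\Odd$. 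Granting both and the two hypotheses of the theorem, property (i) gives $R \cap W_\Odd = \emptyset$, so $R \subseteq V \setminus W_\Odd = W_\Even$ by Theorem~\ref{thm:positional-determinacy}, while (ii) gives $W_\Even \subseteq R$; hence $R = W_\Even$, and the dual argument shows $\mathrm{Univ}_\Odd$ returns $W_\Odd$.

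Both (i) and (ii) I would establish by a single simultaneous induction on the priority bound $d$, exploiting the mutual recursion between the two procedures and the observation that the body of $\mathrm{Univ}_\Even$ mimics exactly one layer of the enhanced McNaughton--Zielonka construction of~(\ref{eq:even}) and~(\ref{eq:odd}): iteration $i$ deletes $\Attr_\Odd^{\Gc_i}(U_i)$, where $U_i = \mathrm{Univ}_\Odd(\Gc_i', d-1, \Tc^{\Even}, \Tc^{\Odd}_i)$, so the children $\Tc^{\Odd}_1, \dots, \Tc^{\Odd}_k$ of $\Tc^{\Odd}$ are to be matched against the top-level blocks of an Odd attractor decomposition of the dominion being separated, just as the iterations of~(\ref{eq:odd}) are. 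For the lower bound (ii) the invariant to maintain is $W_\Even \subseteq V^{\Gc_i}$ throughout the loop: by inductive soundness of the recursive call (which is property (i) applied to $\mathrm{Univ}_\Odd$) each $U_i$ avoids the largest Even dominion of $\Gc_i'$, hence is an Odd dominion there; trap transitivity promotes it to an Odd dominion of $\Gc_i$, its Odd attractor is again an Odd dominion, and so $\Attr_\Odd^{\Gc_i}(U_i)$ is disjoint from $W_\Even$ by dominion disjointness (Proposition~\ref{prop:dominion-disjointness}), leaving $W_\Even \subseteq V^{\Gc_{k+1}} = R$. For the upper bound (i) the key is that the $i$-th deletion actually captures the matching top-level block of the fixed Odd decomposition of $W_\Odd$: using inductive completeness of the recursive call (property (ii) for $\mathrm{Univ}_\Odd$, legitimate because $\Tc^{\Odd}_i$ embeds the block's subtree) together with trap transitivity, I would show that after all $k$ iterations every vertex of $W_\Odd$ has been swept into some $\Attr_\Odd^{\Gc_i}(U_i)$ and thus removed.

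The engine that closes the induction is the restriction-of-decompositions lemma --- the embeddable decomposition theorem announced in the abstract --- which asserts that an attractor decomposition of a player's largest dominion in $\Gc$ restricts, along any trap handed to a recursive call, to an attractor decomposition of that player's largest dominion in the subgame, whose tree is a subtree and hence embedded in the original. This is precisely what propagates each embedding hypothesis from $\Gc$ down to the subgames $\Gc_i'$, so that the recursive invocations meet the embedding precondition demanded by the induction hypothesis; and, read in reverse through Propositions~\ref{prop:decomposition-dominion-even} and~\ref{prop:decomposition-dominion-odd}, it reconstructs a genuine attractor decomposition certifying that the returned set is itself a dominion.

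I expect the main obstacle to be the bookkeeping in (i): only $k$ iterations are available, a budget fixed in advance by the width of $\Tc^{\Odd}$ rather than by an adaptive empty-set test, and I must show this fixed budget still peels the \emph{entire} dominion $W_\Odd$. The delicate point is to align the indices $i_1 < \cdots < i_m \le k$ witnessing that $\Tc^{\Odd}$ embeds the decomposition tree with the loop iterations, while accounting for the fact that the subgames $\Gc_i$ shrink in a way not controlled by the tree and that a single iteration may absorb more than its nominal block. Attractor monotonicity and trap transitivity (both recalled in the preliminaries) are the tools I would lean on to tame this drift, and arranging the two invariants so that they survive the mutual induction without circular dependence between the soundness of one procedure and the completeness of the other is where the real care is required.
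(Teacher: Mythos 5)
Your overall architecture is the same as the paper's: the paper proves a \emph{dominion separation theorem} (Theorem~\ref{thm:dominion-separation}) and then glues it with positional determinacy (Theorem~\ref{thm:positional-determinacy}) exactly as in your first paragraph. The genuine gap is in how you set up the induction: your statements (i) and (ii) are formulated only for the \emph{largest} dominia $W_{\Even}$ and $W_{\Odd}$, and this formulation cannot be propagated through the recursion. The recursive call in iteration~$i$ runs on $\Gc_i' = \Gc_i \setminus \Attr_{\Even}^{\Gc_i}(D_i)$, and the largest Even dominion of~$\Gc_i'$ can be \emph{strictly larger} than the restriction of $W_{\Even}(\Gc)$ to~$\Gc_i'$ --- this is precisely why McNaughton-Zielonka needs more than one iteration of its loop. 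The embeddable decomposition theorem (Theorem~\ref{thm:embeddable-decomposition}) only yields that the restriction $W_{\Even}(\Gc) \cap \Gc_i'$ is \emph{some} Even dominion of~$\Gc_i'$ whose decomposition tree is embedded by~$\Tc^{\Even}$; it says nothing about the subgame's largest dominion. So your ``restriction-of-decompositions lemma'' --- that a decomposition of the largest dominion of~$\Gc$ restricts to a decomposition of the \emph{largest} dominion of the subgame --- is false as stated, and the embedding precondition you need in order to invoke (i)/(ii) inductively on~$\Gc_i'$ is simply not available. The paper's separation theorem repairs exactly this by quantifying over \emph{all} Even dominia that $\Tc^{\Even}$ embeds and all Odd dominia that the prefix $\seq{\Tc_1^{\Odd}, \dots, \Tc_i^{\Odd}}$ embeds; that stronger, per-iteration statement is closed under the restriction operation that Theorem~\ref{thm:embeddable-decomposition} provides, whereas the largest-dominion statement is not.

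A second broken link sits inside your argument for (ii): from ``$U_i$ avoids the largest Even dominion of~$\Gc_i'$'' you infer that $U_i$ ``hence is an Odd dominion there.'' Disjointness from the largest Even dominion only gives $U_i \subseteq W_{\Odd}(\Gc_i')$ by determinacy; a subset of a dominion need not be a dominion, and for undersized trees the universal algorithm genuinely returns sets that are not dominia (with $\Tc^{\Even} = \seq{}$, procedure $\text{Univ}_{\Odd}$ returns all of its input game). So the chain ``dominion in the trap $\Rightarrow$ dominion in $\Gc_i$ $\Rightarrow$ attractor is a dominion $\Rightarrow$ disjoint from $W_{\Even}$'' collapses at its first step. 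The paper never needs $U_i$ to be a dominion: it shows $M \cap U_i = \emptyset$ for each embedded Even dominion~$M$ (by applying the induction hypothesis to $M' = M \setminus \Attr_{\Even}^{\Gc_i}(D_i)$, which Theorem~\ref{thm:embeddable-decomposition} shows is an embedded Even dominion of~$\Gc_i'$), and then gets $M \cap \Attr_{\Odd}^{\Gc_i}(U_i) = \emptyset$ from Proposition~\ref{prop:trap-safe-from-attractor}, using only that $M$ is a trap for Odd in~$\Gc_i$. Finally, the difficulty you flag about the fixed budget of $k$ iterations in (i) is real, and the paper's resolution needs both the prefix-indexed statement and an interleaving of the two directions: it peels the Odd decomposition of~$M$ into $A_{\leq \ddi-1}$ (handled by the prefix hypothesis) and a last block, and for that block it needs the \emph{inclusion} direction $S' \subseteq U_i$ for the recursive call --- a step that cannot even be formulated under your largest-dominion-only hypotheses.
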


\subsection{Embeddable decomposition theorem}

Before we prove Theorem~\ref{thm:correctness-of-universal-algorithm}, in this section
we establish another technical result---the \AP""embeddable decomposition 
theorem""---that enables our generalization of Parys's "dominion
separation" technique.  
Its statement is intuitive: 
a "subgame" induced by a "trap" has a simpler "attractor decomposition"
structure than the whole game itself; 
its proof, however, seems to require some careful surgery.

\begin{restatable}[Embeddable decomposition]{theorem}{embeddabledecomposition}
  \labelwithproof{thm:embeddable-decomposition}
  If $T$ is a "trap" for Even in a "parity game"~$\Gc$ and 
  $\Gc' = \Gc \cap T$ is the "subgame" induced by~$T$, 
  then for every Even "attractor decomposition"~$\Hc$ of~$\Gc$, 
  there is an Even "attractor decomposition"~$\Hc'$ of $\Gc'$, such
  that $\Tc_{\Hc}$ "embeds@@tree" $\Tc_{\Hc'}$. 
\end{restatable}

In order to streamline the proof of the "embeddable decomposition
theorem", we state the following two propositions, which synthesize or
generalize some of the arguments that were also used by
Lehtinen, Parys, Schewe and Wojtczak~\cite{LPSW22}.   
Proofs are included in the Appendix.

\begin{proposition}
  \label{prop:dominion-cap-trap}
  Suppose that $R$ is a "trap" for Even in game~$\Gc$.
  Then if $T$ is a trap for Odd in~$\Gc$ then $T \cap R$ is a trap for
  Odd in "subgame"~$\Gc \cap R$, and if $T$ is an Even "dominion" in~$\Gc$
  then $T \cap R$ is an Even dominion in~$\Gc \cap R$.  
\end{proposition}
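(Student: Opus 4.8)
The plan is to treat the two claims separately but to drive both from a single structural feature of~$R$: since $R$ is a trap for Even, every Even vertex of~$R$ has \emph{all} of its outgoing edges inside~$R$, whereas every Odd vertex of~$R$ merely retains \emph{some} edge inside~$R$. The first half of this observation is what makes Even's moves ``$R$-safe'' for free, and it is the hinge of both parts of the proof. As a preliminary I would record that it also guarantees that $\Gc \cap R$ is genuinely a subgame, as already remarked in the text.

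For the trap claim, I would verify the two defining conditions of ``$T \cap R$ is a trap for Odd'' directly in the subgame $\Gc \cap R$. At an Odd vertex $w \in T \cap R$, every edge $(w, u)$ surviving in $\Gc \cap R$ has $u \in R$, and since $T$ traps Odd in~$\Gc$ we also get $u \in T$; so every such edge stays in $T \cap R$. At an Even vertex $w \in T \cap R$, the trap-for-Odd property of~$T$ furnishes an edge into~$T$, and the trap-for-Even property of~$R$ forces that very edge into~$R$; hence it stays in $T \cap R$ and lives in $\Gc \cap R$. The only place the hypothesis on~$R$ enters is this last compatibility step, which ensures that Even's ``stay in~$T$'' edge is automatically a ``stay in~$R$'' edge, so the two trapping requirements are simultaneously satisfiable.

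For the dominion claim, I would take an Even dominion strategy~$\sigma$ witnessing that $T$ is an Even dominion in~$\Gc$ and restrict it to~$R$, setting $\sigma' = \sigma \cap (R \times R)$. Three checks remain. That $\sigma'$ is a legal Even strategy in $\Gc \cap R$ follows because at an Even vertex of~$R$ the $\sigma$-edge already lands in~$R$ (trap-for-Even), so it survives in~$\sigma'$, while at an Odd vertex every edge of $\Gc \cap R$ is an edge of~$\Gc$ and hence already in~$\sigma$. That $\sigma'$ traps Odd in $T \cap R$ reduces to the trap claim just established, since the dominion~$T$ is itself a trap for Odd. Finally, any cycle in the subgraph $(T \cap R, \sigma')$ is a cycle in $(T, \sigma)$ because $\sigma' \subseteq \sigma$ and $T \cap R \subseteq T$; as priorities are inherited in the induced subgame, such a cycle is even, so all cycles of $(T \cap R, \sigma')$ are even and $\sigma'$ is an Even dominion strategy on $T \cap R$.

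I do not anticipate a real obstacle: the content is careful bookkeeping organized around the asymmetry of a trap. The one subtle point to watch is the strategy restriction in the dominion claim—one must confirm that discarding the edges leaving~$R$ strands no vertex without an outgoing strategy edge. This is exactly what the trap-for-Even hypothesis on~$R$ secures for Even vertices, while the standing assumption that every vertex has an outgoing edge, together with the trap condition on~$R$, covers the Odd vertices.
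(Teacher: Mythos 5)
Your proposal is correct and follows essentially the same route as the paper: a direct check of the two trapping conditions (Odd vertices of $T \cap R$ keep all their $\Gc \cap R$-edges in $T$, and the Even vertices' witness edge into $T$ is forced into $R$ by the trap-for-Even property), followed by restricting an Even dominion strategy $\sigma$ on $T$ to the subgame $\Gc \cap R$. The paper compresses the dominion half into a single remark that $\sigma$ itself works on $T \cap R$ in $\Gc \cap R$, whereas you spell out the restriction $\sigma' = \sigma \cap (R \times R)$ and its legality; note only that the statement ``$\sigma'$ traps Odd in $T \cap R$'' follows most cleanly from $\sigma' \subseteq \sigma$ and $\sigma' \subseteq R \times R$ directly, rather than from the trap claim, which asserts the existence of \emph{some} trapping strategy rather than a property of $\sigma'$.
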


The other proposition is illustrated in
Figure~\ref{figure:complement-of-attractor-in-trap}.
Its statement is more complex than that of the first proposition. 
The statement and the proof describe the relationship between
the Even "attractor" of a set~$B$ of vertices in a game~$\Gc$ and the 
Even attractor of the set $B \cap T$ in "subgame"~$\Gc \cap T$, where
$T$ is a "trap" for Even in~$\Gc$.

\begin{figure}
\centering
\begin{minipage}{.48\textwidth}
  \centering
  \makebox[\textwidth][c]{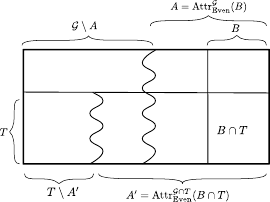}
  \caption{"Traps" and "attractors" in
    Proposition~\ref{prop:attractor-inside-trap}.} 
  \label{figure:complement-of-attractor-in-trap}
\end{minipage}%
\hfill%
\begin{minipage}{.48\textwidth}
  \centering
  \makebox[\textwidth][c]{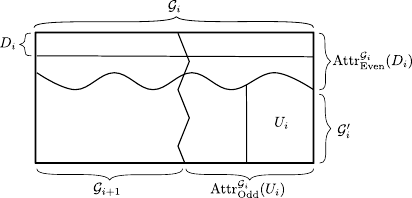}
  \caption{"Attractors" and "subgames" in one iteration of the loop in
    "attractor decomposition" algorithms.} 
  \label{figure:McN-Z-iter}
\end{minipage}
\end{figure}

\begin{proposition}
  \label{prop:attractor-inside-trap}
  Let $B \subseteq V^{\Gc}$ and let $T$ be a "trap" for Even in
  game~$\Gc$. 
  Define $A = \Attr_\Even^\Gc(B)$
  and $A' = \Attr_\Even^{\Gc \cap T}(B \cap T)$.
  Then $T \setminus A'$ is a "trap" for Even in
  "subgame"~$\Gc \setminus A$.
\end{proposition}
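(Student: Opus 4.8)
The plan is to establish two facts in sequence: first, that the vertices common to the global attractor $A$ and the trap $T$ are already captured by the internal attractor $A'$, i.e.\ $A \cap T \subseteq A'$; and second, that no vertex of $T \setminus A'$ can escape this set while play stays inside $\Gc \setminus A$, which is exactly the trap property to be proved. The first fact is what glues the two games $\Gc$ and $\Gc \cap T$ together, and once it is available the second is a direct verification.

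For the first fact I would argue by induction on the rank of $A = \Attr_{\Even}^{\Gc}(B)$, viewing the Even attractor as the union of layers $A_0 = B \subseteq A_1 \subseteq \cdots$, where a vertex enters the next layer if it is an Even vertex with some out-edge into the current layer, or an Odd vertex all of whose out-edges lie in the current layer. The claim is $A_\alpha \cap T \subseteq A'$ for every $\alpha$. The base case is immediate, as $A_0 \cap T = B \cap T \subseteq A'$. For the inductive step, the decisive point is the two halves of the ``trap for Even'' characterisation: every out-edge of a vertex in $V_{\Even} \cap T$ remains in $T$, and every vertex in $V_{\Odd} \cap T$ keeps at least one out-edge in $T$ (so that $\Gc \cap T$ is a subgame). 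Thus for $v \in A_{\alpha+1} \cap T$: if $v \in V_{\Even}$, the witnessing edge into $A_\alpha$ stays in $T$, giving an edge of $\Gc \cap T$ into $A_\alpha \cap T \subseteq A'$, so $v \in A'$ by attractor closure in $\Gc \cap T$; if $v \in V_{\Odd}$, each out-edge of $v$ in $\Gc \cap T$ is among its out-edges in $\Gc$ and hence lands in $A_\alpha \cap T \subseteq A'$, so again $v \in A'$. This yields $A \cap T \subseteq A'$, whence $T \setminus A' \subseteq T \setminus A \subseteq V \setminus A = V^{\Gc \setminus A}$; so $T \setminus A'$ is a set of vertices of the subgame $\Gc \setminus A$, which is indeed a subgame since the complement of an Even attractor is a trap for Even.

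Then I would verify the trap property by exhibiting an Odd strategy in $\Gc \setminus A$ that traps Even in $T \setminus A'$. For an Odd vertex $v \in T \setminus A'$: since $v \notin A' = \Attr_{\Even}^{\Gc \cap T}(B \cap T)$ and $A'$ is closed under the attractor operator, not all out-edges of $v$ in $\Gc \cap T$ enter $A'$, so there is an edge $(v,w)$ of $\Gc \cap T$ with $w \in T \setminus A'$; as $w \in T \setminus A' \subseteq V \setminus A$, this edge also lives in $\Gc \setminus A$, and the strategy selects it. For an Even vertex $v \in T \setminus A'$, every out-edge is retained by any Odd strategy, and each such edge $(v,w)$ of $\Gc$ satisfies $w \in T$ (trap for Even), $w \notin A$ (else $v \in A$ by closure of $A$ in $\Gc$, contradicting $v \notin A$), and $w \notin A'$ (else $v \in A'$ by closure of $A'$ in $\Gc \cap T$, contradicting $v \notin A'$); hence $w \in T \setminus A'$ and the edge lies in $\Gc \setminus A$. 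Odd vertices outside $T \setminus A'$ may play along any edge of $\Gc \setminus A$. The resulting strategy is well defined, as every vertex of the subgame $\Gc \setminus A$ has an out-edge there, and no strategy edge issuing from $T \setminus A'$ leaves it, so $T \setminus A'$ is a trap for Even in $\Gc \setminus A$; should $T \setminus A'$ be empty, the statement holds vacuously.

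I expect the main obstacle to be the careful bookkeeping across the two games $\Gc$ and $\Gc \cap T$: the attractor closure properties must be invoked in the correct game, and one must repeatedly combine the two halves of the ``trap for Even'' characterisation with the inclusion $A \cap T \subseteq A'$ to guarantee that the relevant edges neither leave $T$ nor enter $A$. The inclusion $A \cap T \subseteq A'$ is the real crux; once it is secured, the trap verification reduces to a routine, if fiddly, case analysis.
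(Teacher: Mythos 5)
Your proof is correct, and it shares the paper's two-step skeleton: first establish the key inclusion $A \cap T \subseteq A'$, then deduce that $T \setminus A'$ is a trap for Even in $\Gc \setminus A$. Where you differ is in how each step is discharged. For the inclusion, the paper restricts a strategy: an Even reachability strategy to $B$ from $A$ in $\Gc$ remains, inside the trap $T$, an Even reachability strategy to $B \cap T$ from $A \cap T$ in $\Gc \cap T$, so $A \cap T \subseteq A'$ by maximality of the attractor; you instead run an induction on attractor ranks, pushing each layer $A_\alpha \cap T$ into $A'$ via the closure properties of $A'$ in $\Gc \cap T$. For the trap property, the paper invokes trap transitivity ($T \setminus A'$ is a trap for Even in $\Gc \cap T$ as the complement of an Even attractor, hence a trap for Even in $\Gc$, and its disjointness from $A$, which follows from the inclusion, makes it a trap in $\Gc \setminus A$); you instead build the trapping Odd strategy explicitly, again from attractor closure, including the choice of arbitrary edges for Odd vertices outside $T \setminus A'$. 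The trade-off: the paper's argument is shorter and reuses machinery already in play (strategy restriction, trap transitivity), while yours is self-contained, relying only on the layer-by-layer definition of attractors, at the cost of the ``fiddly case analysis'' you acknowledge. Both are sound, and your explicit handling of the case where $T \setminus A'$ is empty is a small point the paper leaves implicit.
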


We prove the "embeddable decomposition theorem" by
induction on the number of leaves of the "tree of attractor
decomposition"~$\Hc$.
Note that our definition of an "attractor decomposition"
allows for $S_i$ to be any non-empty "trap" for Odd in~$\Gc_i$ in which every vertex "priority" is at
most~$d-2$, whereas Daviaud, Jur\-dzi\'n\-ski, and Lehtinen's  definition~\cite{DJL19}
ask for $S_i$ to be the \emph{maximal} "trap" for Odd satisfying the aforementioned property.
Relaxing the definition of "attractor decompositions" is crucial for Proposition~\ref{prop:attractor-inside-trap}
to hold.

\subsection{Dominion separation theorem}
\label{section:dominion-separation-theorem}

The simple dominion disjointness property
(Proposition~\ref{prop:dominion-disjointness}) 
states that every Even "dominion" is disjoint from every Odd "dominion". 
For two sets $A$ and $B$, we say that another set~$X$ 
\emph{separates $A$ from~$B$} if $A \subseteq X$ and 
$X \cap B = \emptyset$. 
In this section we establish a very general ""dominion separation
property"" for "subgames" that occur in iterations of the "universal
algorithm".
This allows us to prove one of the main technical results of this
paper (Theorem~\ref{thm:correctness-of-universal-algorithm})
that describes a detailed structural sufficient condition for the
correctness of the "universal algorithm". 

\begin{restatable}[Dominion separation]{theorem}{dominionseparation}
  \labelwithproof{thm:dominion-separation}
  Let $\Gc$ be an $(n, d)$-"small@@game" "parity game" and let 
  $\Tc^{\Even} = \seq{\Tc^{\Even}_1, \dots, \Tc^{\Even}_{\ell}}$
  and
  $\Tc^{\Odd} = \seq{\Tc^{\Odd}_1, \dots, \Tc^{\Odd}_k}$
  be trees of height at most $\lceil d/2 \rceil$ and
  $\lfloor d/2 \rfloor$, respectively.
  
  \begin{enumerate}[label=(\alph*)]
  \item
    \label{item:dominion-separation-a}
    If $d$ is even and $\Gc_1, \dots, \Gc_{k+1}$ are the games that are
    computed in the successive iterations of the loop in the call 
    $\solveEven\bigl(\Gc, d, \Tc^{\Even}, \Tc^{\Odd}\bigr)$, 
    then for every $i = 0, 1, \dots, k$, we have that $\Gc_{i+1}$
    separates every Even "dominion" in~$\Gc$ that tree $\Tc^{\Even}$
    "embeds@@dominion" from every Odd "dominion" in~$\Gc$ that tree  
    $\seq{\Tc_1^{\Odd}, \hdots, \Tc_i^{\Odd}}$
    "embeds@@dominion".   

  \item
    \label{item:dominion-separation-b}
    If $d$ is odd and $\Gc_1, \dots, \Gc_{\ell+1}$ are the games that
    are computed in the successive iterations of the loop in the call 
    $\solveOdd\bigl(\Gc, d, \Tc^{\Even}, \Tc^{\Odd}\bigr)$, 
    then for every $i = 0, 1, \dots, \ell$, we have that $\Gc_{i+1}$
    separates every Odd "dominion" in~$\Gc$ that tree $\Tc^{\Odd}$
    "embeds@@dominion" from every Even "dominion" in~$\Gc$ that tree 
    $\seq{\Tc_1^{\Even}, \hdots, \Tc_i^{\Even}}$ "embeds@@dominion".   
  \end{enumerate}
\end{restatable}


\subsection{Correctness and complexity}

The "dominion separation theorem"
(Theorem~\ref{thm:dominion-separation}) 
allows us to conclude the proof of the main "universal algorithm"
correctness theorem
(Theorem~\ref{thm:correctness-of-universal-algorithm}).   
Indeed, if trees $\Tc^{\Even}$ and $\Tc^{\Odd}$ satisfy the conditions
of Theorem~\ref{thm:correctness-of-universal-algorithm} then, by the
"dominion separation theorem", the set returned by the call 
$\solveEven\bigl(\Gc, d, \Tc^{\Even}, \Tc^{\Odd}\bigr)$ 
separates the largest Even "dominion" from the largest Odd "dominion", and
hence---by the positional determinacy theorem
(Theorem~\ref{thm:positional-determinacy})---it is the largest Even
dominion. 
The argument for procedure~$\solveOdd$ is analogous. 

We note that the "universal algorithm" correctness theorem, together
with Propositions~\ref{prop:universal-non-adaptive-Parys}
and~\ref{prop:universal-LSW}, imply correctness of the non-"adaptive"
version of "Parys's algorithm"~\cite{Par19} and of
"Lehtinen-Schewe-Wojtczak algorithm"~\cite{LSW19}, because trees of
"attractor decompositions" are $(n, d/2)$-"small@@tree" 
(Proposition~\ref{prop:tree-of-decomposition-is-small}) and 
trees~$\parys{n, d/2}$ and~$\succinct{n, d/2}$ are $(n, d/2)$-"universal".

The following fact, an alternative restatement of the conclusion of
Lehtinen et al.~\cite{LSW19}, is a simple corollary of the precise
asymptotic upper bounds on the size of the "universal 
trees"~$\succinct{n, d/2}$ established by Jurdzi\'nski and
Lazi\'c~\cite{JL17}, and of Propositions~\ref{prop:universal-LSW}, 
\ref{prop:tree-of-recursive-calls-even}, 
and~\ref{prop:size-of-interleaving}. 

\begin{proposition}[Complexity]
\label{prop:complexity-of-LSW}
  The "universal algorithm" that uses "universal trees" $\succinct{n, d/2}$
  (aka.\ Lehtinen-Sche\-we-Wojt\-czak algorithm) solves $(n, d)$-"small@@game"
  "parity games" in polynomial time if $d = O(\log n)$, and in time  
  $n^{2\lg (d/{\lg n}) + O(1)}$ if $d = \omega(\log n)$. 
\end{proposition}

\subsection{Acceleration by tree pruning}
\label{section:early-termination-heuristics}

As we have discussed in Section~\ref{sec:universal-algorithm},
Parys~\cite{Par19} has achieved a breakthrough of developing the first
"quasi-polynomial" "McNaughton-Zielonka"-style algorithm for "parity games"
by pruning the tree of recursive calls down to "quasi-polynomial" size.  
Proposition~\ref{prop:universal-non-adaptive-Parys} clarifies that
Parys's scheme can be reproduced by letting the "universal algorithm"
run on "universal trees"~$\parys{n, d/2}$, but as it also mentions, just
doing so results in a ``non-"adaptive"'' version of "Parys's algorithm".   
What is the ``"adaptive"'' version actually proposed by "Parys"? 

Recall that the root of tree $\parys{n, h}$ has $n+1$ children, the
first $n/2$ and the last~$n/2$ children are the roots of copies of  
tree~$\parys{n/2, h-1}$, and the middle child is the root of a copy
of tree~$\parys{n, h-1}$. 
The "adaptive" version of "Parys's algorithm" also uses another
tree-prunning rule, which is adaptive and a slight generalization of
the "empty-set rule":
whenever the algorithm is processing the block of the first $n/2$
children of the root or the last $n/2$ children of the root, if one of
the recursive calls in this block returns an empty set then the rest
of the block is omitted. 

We expect that our structural results 
(such as Theorems~\ref{thm:correctness-of-universal-algorithm}
and~\ref{thm:dominion-separation})
will provide insights to inspire development and proving correctness
of further and more sophisticated adaptive tree-pruning rules, but we
leave it to future work.   
This may be critical for making quasi-polynomial versions of
"McNaughton-Zielonka" competitive in practice with its basic version
that is exponential in the worst case, but remains very hard to beat
in practice~\cite{vDij18,LPSW22}. 

\section{Computing nested fixpoints}
\label{sec:nested-fixpoints}

Computing fixpoints is fundamental in the study of computer science. Solving "nested fixpoint equations" ("NFEs") over finite lattices are known to be computationally equivalent to solving "parity games"~\cite{BKMP21}, however, most of the reductions involve an exponential increase in the size of the resulting "parity game". The satisfiability problem of the coalgebraic $\mu$-calculus has also been 
reduced to the same~\cite{HS19}. A corollary of Calude et. al's breakthrough result was that specific kinds of "fixpoint equations" could be solved in "quasi-polynomial" time. Following this progress, there were several algorithms targeted at solving more general "fixpoint equations" by using universal graphs\cite{HS21} and "universal trees"~\cite{ANP21}. Hausman and Schr\"{o}der gave a quasi-polynomial algorithm to solve "NFEs" using progress measures on universal graphs whereas Arnold, Niwinski and Parys solved "NFEs" using the key result on decompositions of "dominia" similar to an earlier version of this paper. 
Here, we provide a slightly different way of solving "nested fixpoints" by converting the equation to an exponentially sized "fixpoint game", as in~\cite{HS21} but using our "universal attractor decomposition algorithm", parameterised by two trees, as in~\cite{ANP21}. The algorithm proposed by Arnold, Niwinski and Parys is similar to ours, in the sense that
both algorithms use a pair of trees to guide the computation of a subset of a complete lattice and
of a set of vertices in a parity game in our case, respectively. Since we can
describe the set of winning vertices for some player in a parity game with a formula
whose length is linear in the number of distinct "priorities": $d$, the algorithm of \cite{ANP21}
can be seen as a generalisation of Algorithm \ref{algo:universal-parity}. On the other hand, we
explain in this section how to, given a "nested fixpoint equation", run the latter algorithm
on a "parity game"---called a "fixpoint game"---, which has an exponential size compared to the size
of the "NFE", without having an exponential blowup. We thus obtain an algorithm to compute nested fixpoint equations in quasi-polynomial time. In this sense, we argue that the algorithm of Arnold,
Niwinski and Parys is equivalent to ours. However, it should be noted that \cite{ANP21} provides
an asymmetrical version of their algorithm---using a technique of Seidl~\cite{S96}---which is quadratically faster, in the worse case, than Algorithm~\ref{algo:universal-parity}.
In section \ref{section:symbolic}, a detailed description 
of how to implement symbolically both variants of the "universal algorithm", for parity games
and nested fixpoint, which require logarithmically less "symbolic space" than
Chatterjee, Dvo\v{r}\'ak, Henzinger and Svozil quasi-polynomial "symbolic algorithm" \cite{CDHS18} is provided.

We argue that we can directly apply our "universal attractor decomposition algorithm" on these exponential sized "fixpoint games" with the help of a carefully designed data structure, which ensures that we can in fact compute fixpoints using our algorithm in time proportional to $|\Tc_\Odd|\cdot|\Tc_\Even|$. 
\subsection{Nested Fixpoint Equations}
In this subsection, we will define "nested fixpoint equations" over the powerset lattice.
Consider a finite set of elements $U$ and its powerset lattice $\Pc(U)$. Let $f$ be a monotone function (component wise) from $\Pc(U)^d$ to  $\Pc(U)^d$.
The function $f$ can be expressed as a tuple $(f_1, \hdots, f_d)$ of functions
from $\Pc(U)^d$ to $\Pc(U)$, where $f_i$ is the projection
of $f$ to the $i$-th component.

Since there is a natural bijection from $d$ tuples of subsets of $U$ to subsets of ${(U\times [d])}$, we instead denote $f$ as a function from $\Pc(U\times [d])$ to $\Pc(U\times [d])$.

A \AP""nested fixpoint equation"" is a system of $d$ fixpoint equations of the form:
\begin{equation}\label{eqn:nfe}
    X_i \:=_{\eta_i} f_i(X_1,\ldots, X_{d})\tag{$*$}
\end{equation}
for $i$ ranging from $1,\dots d$ and where $\eta_i=\nu$ if $i$ is even, and $\eta_i = \mu$ otherwise. We refer to a system such as ~(\ref{eqn:nfe}) as a "nested fixpoint equation" and refer to it with the short hand: $X =_\eta f(X)$. One could consider a more general form of "fixpoint equations" where $\eta_i \in\{\mu,\nu\}$, but for simplicity of presentation, we restrict ourselves to the above. 

The \AP""solution"" of a system of $d$ "fixpoint equations" as the one defined by~(\ref{eqn:nfe}), is a subset of $U\times [d]$, defined recursively as follows. We say that the solution of the empty set of equations is the empty tuple. For a system of one or more fixpoint equations, we define a function $f^{d-1}$ from subsets of $U$ to subsets of ${\bigl(U\times [d-1]\bigr)}$. This function $f^{d-1}$ takes as input $Y_d$, a subset of $U$, and uses this input to fix $X_d\:= Y_d$ in the system of equations and the solution obtained to the system of $d-1$ equations by fixing $X_d$ to be $Y_d$ is the output of $f^{d-1}$. 
We finally say the solution of the system of equations is $\bigl(f^{d-1}(Y_d), Y_d\bigr)$, where $Y_d = \eta_d\bigl(\lambda X_d.f_d(f^{d-1}(X_d),X_d)\bigr)$. 

\subsection{Fixpoint Games}
Let us now define an equivalent "parity game" $\Gc_f$, called a \AP""fixpoint game"". Solving the parity game $\Gc_f$ correlates to finding the "solution" of the system of "nested fixpoint equation" defined by $X =_\eta f(X)$~\cite{BKMP21,HS21}.

Here, $\Gc_f = (V_f, E_f)$ with the "priority" function $\pi_f$, where $V_f$ consists of the disjoint union $\bigl(U\times [d]\bigr)\cup \{v_A\mid A\subseteq U \times [d]\}$. The vertices corresponding to elements of the set $\bigl(U\times[d]\bigr)$ belong to Even and the ones corresponding to subsets of the same set belong to Odd. The priority function $\pi_f$ assigns Even's vertices $(u,i)$ to $i$, and vertices Odd's vertices to priority $0$.
The edges from a vertex $(u,i)$ belonging to Even in $\Gc_f$ lead to the set of Odd vertices $\{v_A\mid (u,i)\in f(A)\}$ and edges from a vertex $v_A$, belonging to Odd lead to the set of Even vertices $\{(u,i)\mid (u,i)\in A\}$.

Finding if $(u,i)$ is in the "solution" of a "nested fixpoint equation" $X =_\eta f(X)$ is known to be equivalent to solving the corresponding "fixpoint game" $\Gc_f$ of the equation from the even vertex $(u,i)$, as shown in Theorem 4.8 of~\cite{BKMP21}.

\subsection{Solving Fixpoint Games}
We provide a way to solve a "fixpoint game" with the help of the "universal attractor decomposition algorithm" in Section~\ref{sec:universal-algorithm}.

We define a specific kind of "subgames" that we call \AP""flowery subgames"" and show that they
are pertinent to solving "fixpoint games" using the "universal attractor decomposition algorithm". 
Given two subsets $\emptyset \subsetneq Y \subseteq X \subseteq U\times [d]$, we define the flowery subgame
on $(X,Y)$, denoted by $\flower(X,Y)$, to be the "subgame" of $\Gc_f$ whose set of vertices consists of all Odd vertices $v_A$ which is a subset of $X$ intersecting non-trivially with $Y$, resembling the petal of a flower along with all vertices of Even belonging to $Y$, resembling the core of a flower. More formally, we define
\[
    \AP\intro*\flower(X,Y) \:= Y\uplus\{v_A\mid A\subseteq X\text{ and } A\cap Y \neq \emptyset\}.
\]

In the game $\Gc_f$, on removing vertices that have no outgoing edges along with the respective "attractors" to these sets of vertices, i.e., Odd "attractors" to Even vertices with no outgoing edges and vice versa, we get a "flowery subgame". 
Moreover, the following lemma reassures us that all significant operations performed by Algorithm~\ref{algo:universal-parity} on flowery subgames, 
results in flowery subgames.
\begin{restatable}[Floweriness]{lemma}{flowerylemma}
    \labelwithproof{lemma:flowery}
    If \solveEven (resp., \solveOdd) is run on a "flowery subgame", for all iterations in the for-loop, "subgame" $\Gc_i$  is also "flowery". In particular, $\Gc_{k+1}$, which is the subgame returned, is flowery.
\end{restatable}
The "attractor" to a set of vertices during a run of the algorithm can be computed by at most $d|U|$ many computation of $f$ on subsets of $U\times [d]$.  We can therefore solve nested "fixpoint games" in quasi-polynomial time using the "universal attractor decomposition algorithm", by only keeping track of the sets $X$ and $Y$ representing each subgame, as stated below.

\begin{theorem}\label{thm:nfe-complexity}
The modified "universal algorithm" that computes "nested fixpoint equations" on trees $\Tc_{\Odd}$ and $\Tc_{\Even}$ makes $|\Tc_{\Odd}|\cdot|\Tc_{\Even}|$ many recursive calls. Each recursive call makes at most $2d|U|$ many function evaluations of $f$. 
\end{theorem}
\subsection{Concurrent Parity games}

\AP""Concurrent parity games"" have been well studied before. We consider the two player version as studied by Chatterjee, Alfredo and Henzinger in~\cite{CAH11}. These games are played among two players---Even and Odd, but instead of partitioning the vertices among the two players, they take simultaneous actions at each vertex and the token moves to a neighbour depending on the actions of both players. One might also consider a stochastic version where the simultaneous actions are decided by a pre-decided probability distribution. Both the players are allowed to use a randomised "strategy", i.e., a strategy where the next action is proposed with the help of a probability distribution. \AP A state is called ""limit-winning"" for Even (resp. Odd) if Even (resp. Odd) has a strategy to win from that state with probability arbitrarily close to 1. 
The decision question we have at hand, is to determine if a state is a "limit-winning" state for a given input player. 
"Concurrent parity games" vary from original "parity games" in that, a player might need both infinite memory and randomisation to win these games. 
We refer the readers to the work of Chatterjee, Alfaro, and  Henzinger~\cite{CAH11} for a rigorous definition of the above games along with examples for the claims above. In their paper, they show that solving "concurrent parity games" is in $\text{NP}\cap \text{co-NP}$ as a corollary of the following theorem.

\begin{theorem}[{\cite[Theorem~5, Lemma~29 and Lemma~30]{CAH11}}]\label{thm:Concurrent-parity}
"Limit-winning" in a concurrent parity game can be expressed as an "NFE" over the powerset lattice of the set of edges with alternation depth at most $2d$ for a function, whose evaluation involves solving another "NFE" also with depth 
at most $2d$. 
\end{theorem}
An easy corollary from Theorem~\ref{thm:nfe-complexity} along with Theorem~\ref{thm:Concurrent-parity}, we have the following.
\begin{corollary}
"Limit-winning" in "concurrent parity games" can be solved in "quasi-polynomial" time.
\end{corollary}
\section{Symbolic algorithms}
\label{section:symbolic}

"Parity games" that arise in applications, for example
from the automata-theoretic model checking approaches to verification
and automated synthesis, often suffer from the 
\emph{state-space explosion problem}:
the sizes of models are exponential (or worse) in the sizes of natural 
descriptions of the modelled objects, and hence the models obtained
may be too large to store them explicitly in memory. 
One method of overcoming this problem that has been successful in the
practice of algorithmic formal methods is to represent the models 
symbolically rather than explicitly, and to develop algorithms for
solving the  models that work directly on such succinct symbolic
representations~\cite{BCMDH92}.

We adopt the \AP""set-based symbolic model of computation"" that was 
already considered for "parity games" by Chatterjee, Dvo\v{r}\'{a}k,
Henzinger, and Svozil~\cite{CDHS18}. 
In this model, any standard computational operations on any standard
data structures are allowed, but there are also the following symbolic 
resources available:  
  \AP""symbolic set variables"" can be used to store sets of vertices
  in the graph of a parity game;
  basic set-theoretic operations 
  on "symbolic set variables" are available as 
  \AP""primitive symbolic operations"";
  the \emph{controllable predecessors operations} are available as 
  "primitive symbolic operations":
  the Even (resp.\ Odd) controllable predecessor, when
  applied to a symbolic set variable~$X$, returns the set of vertices
  from which Even (resp.\ Odd) can force to move into the set~$X$, by
  taking just one outgoing edge.  
Since "symbolic set variables" can represent possibly very large and
complex objects, they should be treated as a costly resource.   

Chatterjee et al.~\cite{CDHS18} have given a symbolic set-based
algorithm that on $(n, d)$-"small@@game" "parity games" uses $O(d \log n)$ of
"symbolic set variables" and runs in "quasi-polynomial" time.  
While the dependence on~$n$ is only logarithmic, a natural question is 
whether this dependence is inherent.
Given that $n$ can be prohibitively large in applications, 
reducing dependence on~$n$ is desirable. 
In this section we argue that it is not only possible to eliminate the 
dependence on~$n$ entirely, but it is also possible to exponentially 
improve the dependence on~$d$, resulting in a quasi-polynomial
"symbolic algorithm" for solving "parity games" that uses only $O(\lg d)$
"symbolic set variables".

In the "set-based symbolic model of computation", it is
routine to compute the "attractors" efficiently:
it is sufficient to iterate the controllable predecessor operations.  
Using the results of Jurdzi\'nski and Lazi\'c~\cite{JL17}, one can also 
represent a path of nodes from the root to a leaf in the
tree $\succinct{n, d/2}$ in $O(\lg n \cdot \lg d)$ bits, and for every node
on such a path, to compute its number of children in 
$O(\lg n \cdot \lg d)$ standard primitive operations. 
This allows to run the whole "universal algorithm"
(Algorithm~\ref{algo:universal-parity}) on an $(n, d)$-"small@@game" 
"parity game" and two copies of trees~$\succinct{n, d/2}$, using only 
$O(\lg n \cdot \lg d)$ bits to represent the relevant nodes in the
trees~$\Tc^{\Even}$ and~$\Tc^{\Odd}$ throughout the execution. 

The depth of the tree of recursive calls of the "universal
algorithm" on an $(n, d)$-small parity game is at most~$d$.
Moreover, in every recursive call, only a small constant number of set
variables is needed because only the latest sets $V^{\Gc_i}$,  
$D_i$, $V^{\Gc'_i}$, and~$U_i$ are needed at any time. 
It follows that the overall number of "symbolic set variables" needed  
to run the "universal algorithm" is~$O(d)$. 
Also note that every recursive call can be implemented symbolically
using a constant number of "primitive symbolic operations" and two
symbolic "attractor" computations. 

This improves the "symbolic space" from Chatterjee, Dvo\v{r}\'{a}k,
Henzinger, and Svozil's $O(d \lg n)$
to~$O(d)$, while keeping the running time quasi-polynomial. 
This "symbolic algorithm" is very simple and straightforward
to implement, which makes it particularly promising and attractive
for empirical evaluation and deployment in applications.

\begin{restatable}{theorem}{symbolic}
  \labelwithproof{thm:symbolic}
  There exists a "symbolic algorithm" that solves $(n, d)$-"small@@game" "parity
  games" using $O(\lg d)$ "symbolic set variables", 
  $O(\log d \cdot \log n)$ bits of con\-ven\-tio\-nal space, and whose
  running time is polynomial if $d = O(\log n)$, and
  "quasi-polynomial", namely $n^{2\lg(d/{\lg n})+O(1)}$,
  if $d = \omega(\log n)$.   
\end{restatable}
Using the same arguments, we obtain a "symbolic algorithm" to solve "nested fixpoint equations"
in quasi-polynomial time and $O(\lg d)$ "symbolic space".

\bibliographystyle{alphaurl}
\bibliography{parity.bib}

\clearpage\appendix
\section{McNaughton-Zielonka algorithm}
\label{sec:McNZ}

\begin{algorithm}[htb]
    \AP\DontPrintSemicolon
    \SetKwFunction{ZsolveEven}{$\text{McN-Z}_{\Even}$}
    \SetKwFunction{ZsolveOdd}{$\text{McN-Z}_{\Odd}$}
    \SetKwProg{fun}{procedure}{:}{}
    \fun{\ZsolveEven{$\Gc, d$}}{
    \If{$d=0$}{\Return{$V^{\Gc}$}}
      $i \leftarrow 0; \; \Gc_1 \leftarrow \Gc$\;
      \Repeat{$U_i = \emptyset$}{
        $i \leftarrow i+1$\;
        $D_i \leftarrow \pi^{-1}(d) \cap \Gc_i$\;
        $\Gc_i' \leftarrow \Gc_i \setminus \Attr_{\Even}^{\Gc_i}(D_i)$\;
        $U_i \leftarrow \ZsolveOdd\bigl(\Gc_i', d-1
        \bigr)$\;
        $\Gc_{i+1} \leftarrow \Gc_i \setminus \Attr^{\Gc_i}_{\Odd}(U_i)$\;
      }
    \Return{$V^{\Gc_i}$}
    }
    \fun{\ZsolveOdd{$\Gc, d$}}{
      $i \leftarrow 0; \; \Gc_1 \leftarrow \Gc$\;
      \Repeat{$U_i = \emptyset$}{
        $i \leftarrow i+1$\;
        $D_i \leftarrow \pi^{-1}(d) \cap \Gc_i$\;
          $\Gc_i' \leftarrow \Gc_i \setminus \Attr_{\Odd}^{\Gc_i}(D_i)$\;
        $U_i \leftarrow \ZsolveEven\bigl(\Gc_i', d-1
          \bigr)$\;
        $\Gc_{i+1} \leftarrow \Gc_i \setminus \Attr^{\Gc_i}_{\Even}(U_i)$\;
      }
      \Return{$V^{\Gc_i}$}
    }
    \caption{
      \label{algo:Z-Solve}
      \emph{""McNaughton-Zielonka algorithm""}}
  \end{algorithm}
  
  The classic recursive McNaughton-Zielonka algorithm
  (Algorithm~\ref{algo:Z-Solve}) computes the 
  largest dominia in a parity game.
  In order to obtain the largest Even dominion in a parity game~$\Gc$,
  it suffices to call $\ZsolveEven(\Gc, d)$, where $d$ is even and all
  vertex priorities in~$\Gc$ are at most~$d$.
  In order to obtain the largest Odd dominion in a parity game~$\Gc$,
  it suffices to call $\ZsolveOdd(\Gc, d)$, where $d$ is odd and all
  vertex priorities in~$\Gc$ are at most~$d$.
  
  The procedures $\ZsolveEven$ and $\ZsolveOdd$ are mutually recursive
  and whenever a recursive call is made, the second argument~$d$
  decreases by~$1$. 
  Figure~\ref{figure:McN-Z-iter} illustrates one iteration of the main
  loop in a call of procedure $\ZsolveEven$.
  The outer rectangle denotes subgame~$\Gc_i$, the thin horizontal
  rectangle at the top denotes the set $D_i$ of the vertices in~$\Gc_i$
  whose priority is~$d$, and the set below the horizontal wavy line is 
  subgame~$\Gc_i'$, which is the set of vertices in~$\Gc_i$ that are not
  in the attractor~$\Attr_{\Even}^{\Gc_i}(D_i)$. 
  The recursive call of $\ZsolveOdd$ returns the set~$U_i$, and
  $\Gc_{i+1}$ is the subgame to the left of the vertical zig-zag line,
  and it is induced by the set of vertices in~$\Gc_i$ that are not in
  the attractor~$\Attr_{\Odd}^{\Gc_i}(U_i)$. 
  
  A way to prove the correctness of McNaughton-Zie\-lon\-ka algorithm we 
  wish to highlight here is to enhance the algorithm slightly to produce
  not just a set of vertices but also an Even attractor decomposition of
  the set and an Odd attractor decomposition of its complement.   
  We explain how to modify procedure $\ZsolveEven$ and leave it as an
  exercise for the reader to analogously modify procedure $\ZsolveOdd$. 
  In procedure $\ZsolveEven(\Gc, d)$, replace the line  
  \[
    U_i \leftarrow \ZsolveOdd(\Gc_i', d-1)
  \]
  by the line
  \[
    U_i, \Hc_i, \Hc_i' \leftarrow \ZsolveOdd(\Gc_i', d-1)\,.
  \]
  Moreover, if upon termination of the {\bf repeat-until} loop we have 
  \[
    \Hc_i \: = \:
    \seq{\emptyset, (S_1, \Ic_1, A_1), \dots, (S_k, \Ic_k, A_k)}
  \]
  then instead of returning just the set~$V^{\Gc_i}$, let the procedure 
  return both $V^{\Gc_i}$ and the following two objects:
  \begin{equation}
  \label{eq:even}
    \seq{\Attr_{\Even}^{\Gc_i}(D_i), (S_1, \Ic_1, A_1), \dots, (S_k, \Ic_k, A_k)}
  \end{equation}
  and 
  \begin{equation}
  \label{eq:odd}
    \seq{\emptyset, \bigl(U_1, \Hc_1', \Attr_{\Odd}^{\Gc_1}(U_1)\bigr), \dots, 
      \bigl(U_i, \Hc_i', \Attr_{\Odd}^{\Gc_i}(U_i)\bigr)}
  \end{equation}
  

  In an inductive argument by induction on~$d$ and~$i$, the inductive
  hypothesis is that:
  \begin{itemize}
  \item
    $\Hc_i'$ is an Odd $(d-1)$-attractor decomposition of the subgame
    $\Gc'_i \cap U_i$;
  \item
    $\Hc_i$ is an Even $d$-attractor decomposition of the
    subgame~$\Gc'_i \setminus U_i$;
  \end{itemize}
  and the inductive step is then to show that:
  \begin{itemize}
  \item  
    for every~$i$, (\ref{eq:odd}) is an Odd $(d+1)$-attractor
    decomposition of subgame~$\Gc \setminus \Gc_{i+1}$;
  \item
    upon termination of the \textbf{repeat-until} loop, (\ref{eq:even})
    is an Even $d$-attractor decomposition of subgame~$\Gc_{i+1}$.  
  \end{itemize}
  The general arguments in such a proof are well
  known~\cite{McN93,Zie98,JPZ08,DJL18}  
  and hence we omit the details here.  
  
\section{Embeddable decomposition Theorem}

\begin{proofappendix}{thm:embeddable-decomposition}{\embeddabledecomposition}
  Without loss of generality, assume that $d$ is even and 
  \[
  \Hc = \seq{A, (S_1, \Hc_1, A_1), \dots, (S_k, \Hc_k, A_k)}
  \]
  is an Even $d$-attractor decomposition of~$\Gc$,
  where $A$ is the Even attractor to the set~$D$ of vertices of
  priority~$d$ in~$\Gc$.
  In Figure~\ref{figure:embeddable-decomposition-proof}, set~$T$ and
  the subgame $\Gc'$ it induces form the pentagon obtained from the
  largest rectangle by removing the triangle above the diagonal line in
  the top-left corner. 
  Sets $A$, $S_1$, and $A_1$ are also illustrated, together with 
  sets $A'$, $S_1'$, $A_1'$ and subgames $\Gc_1$, $\Gc_2$,
  $\Gc_1'$, and $\Gc_2'$, which are defined as follows.

  Let $\Gc_1 = \Gc \setminus A$, and $\Gc_2 = \Gc_1 \setminus A_1$.
  We will define sets $A'$, $S_1'$, $A_1'$, \dots, $S_\ell'$, $A_\ell'$,
  and Even $(d-2)$-attractor decompositions $\Hc_1', \dots, \Hc_\ell'$ 
  of subgames $\Gc \cap S_1'$, \dots, $\Gc \cap S_\ell'$,
  respectively, such that
  \[
  \Hc'= \seq{A', (S_1', \Hc_1', A_1'), \dots, (S_k', \Hc_\ell', A_\ell')}
  \]
  is an Even $d$-attractor decomposition of subgame~$\Gc'$ and  
  $\Tc_{\Hc}$ embeds~$\Tc_{\Hc'}$.

  Let $A'$ be the Even attractor to $D \cap T$ in~$\Gc'$
  and let $\Gc_1' = \Gc' \setminus A'$.
  Set $S_1' = S_1 \cap \Gc_1'$,
  let $A_1'$ be the Even attractor to~$S_1'$ in~$\Gc_1'$, and
  let $\Gc_2' = \Gc_1' \setminus A_1'$.

  Firstly, since $D \subseteq V^{\Gc}$ and $T$ is a trap for Even
  in~$\Gc$, by Proposition~\ref{prop:attractor-inside-trap}, we have
  that $\Gc_1'$ is a trap for Even in subgame~$\Gc_1$.
  Since $S_1 \subseteq V^{\Gc_1}$ and subgame $\Gc_1'$ is a trap for
  Even in subgame~$\Gc_1$, again by
  Proposition~\ref{prop:attractor-inside-trap}, we conclude that
  $\Gc_2'$ is a trap for Even in subgame~$\Gc_2$.

  Secondly, we argue that $S_1'$ is an Even dominion in
  subgame~$\Gc_1'$.
  This follows by recalling that $S_1$ is a dominion for Even
  in~$\Gc_1$ and $\Gc_1'$ is a trap for Even in~$\Gc_1$, and then
  applying Proposition~\ref{prop:dominion-cap-trap}.

  Thirdly, we argue that $S_1'$ is a trap for Even in
  subgame~$\Gc \cap S_1$.
  This follows by recalling that $S_1$ is a trap for Odd in~$\Gc_1$
  and that $\Gc_1'$ is a trap for Even in $\Gc_1$, and then applying
  Proposition~\ref{prop:dominion-cap-trap}.
  
  \begin{figure}
    \centering
    \makebox[\textwidth][c]{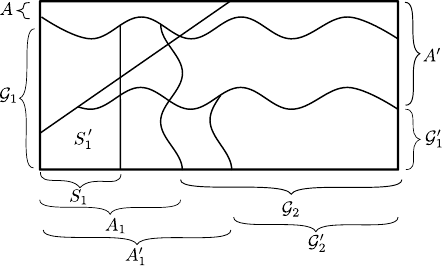}
    \caption{Attractors, subgames, and dominia in the proof of the
      embeddable decomposition theorem.}  
    \label{figure:embeddable-decomposition-proof}
  \end{figure}

  We are now in a position to apply the inductive hypothesis twice in
  order to complete the definition of the attractor
  decomposition~$\Hc'$.
  Firstly, recall that $S_1'$ is a trap for Even in subgame
  $\Gc \cap S_1$ and that $\Hc_1$ is a $(d-2)$-attractor
  decomposition of~$\Gc \cap S_1$, so we can apply the inductive
  hypothesis to obtain a $(d-2)$-attractor decomposition~$\Hc_1'$
  of subgame~$\Gc \cap S_1'$,
  such that $\Tc_{\Hc_1}$ embeds~$\Tc_{\Hc_1'}$.
  Secondly, note that
  \[
    \Ic \: = \: 
    \seq{\emptyset, (S_2, \Hc_2, A_2), \dots, (S_k, \Hc_k, A_k)}
  \]
  is a $d$-attractor decomposition of~$\Gc_2$.
  We find a $d$-attractor decomposition~$\Ic'$
  of subgame~$\Gc_2'$,
  such that~$\Tc_{\Ic}$ embeds~$\Tc_{\Ic'}$.
  Recalling that $\Gc_2'$ is a trap for Even in subgame~$\Gc_2$, it
  suffices to use the inductive hypothesis for subgame $\Gc_2'$ of
  game $\Gc_2$ and the $d$-attractor decomposition~$\Ic$
  of~$\Gc_2$.

  Verifying that $\Hc'$ is a $d$-attractor decomposition of~$\Gc'$
  is routine.
  That $\Tc_{\Hc}$ embeds~$\Tc_{\Hc'}$
  also follows routinely from
  $\Tc_{\Hc_1}$ embedding~$\Tc_{\Hc_1'}$ and
  $\Tc_{\Ic}$ embedding~$\Tc_{\Ic'}$.
\end{proofappendix}
\section{Dominion separation theorem}

Before we prove the dominion separation theorem: we recall 
a simple proposition from Lehtinen, Parys, Schewe and Wojtczak~\cite{LPSW22}. 
Note that it is a straightfoward corollary of the dual of
Proposition~\ref{prop:attractor-inside-trap}
(in case $B \cap T = \emptyset$). 

\begin{proposition}
  \label{prop:trap-safe-from-attractor}
  If $T$ is a trap for Odd in~$\Gc$ and $T \cap B = \emptyset$ then
  we also have that $T \cap \Attr_{\Odd}^{\Gc}(B) = \emptyset$.
\end{proposition}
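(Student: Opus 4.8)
The plan is to derive the statement as a direct corollary of the dual of Proposition~\ref{prop:attractor-inside-trap}, exactly as the surrounding text suggests. By the symmetry between the two players, the dual of Proposition~\ref{prop:attractor-inside-trap} reads: if $B \subseteq V^{\Gc}$ and $T$ is a trap for Odd in~$\Gc$, then, setting $A = \Attr_{\Odd}^{\Gc}(B)$ and $A' = \Attr_{\Odd}^{\Gc \cap T}(B \cap T)$, the set $T \setminus A'$ is a trap for Odd in the subgame~$\Gc \setminus A$. I would invoke this dual form with the very same $B$ and $T$ as in the statement to be proved.

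The key observation is that the hypothesis $T \cap B = \emptyset$ collapses the inner attractor: since $B \cap T = \emptyset$, we have $A' = \Attr_{\Odd}^{\Gc \cap T}(\emptyset) = \emptyset$ (no vertex can be forced to reach the empty target, as every vertex has an outgoing edge), and hence $T \setminus A' = T$. The dual proposition therefore asserts that $T$ itself is a trap for Odd in the subgame $\Gc \setminus A = \Gc \setminus \Attr_{\Odd}^{\Gc}(B)$. For this conclusion to be well-formed---every trap for a player in a game is by definition a subset of that game's vertex set---we must have $T \subseteq V \setminus A$, which is precisely $T \cap \Attr_{\Odd}^{\Gc}(B) = \emptyset$, the desired conclusion.

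The one point that warrants care is the appeal to the dual of Proposition~\ref{prop:attractor-inside-trap} rather than the Even-player version stated in the excerpt; this is the routine player-swapping symmetry invoked throughout the paper, and nothing beyond it is needed. As an entirely self-contained alternative---useful as a sanity check---one could instead argue by induction on the fixpoint approximants $\Attr_0 = B \subseteq \Attr_1 \subseteq \cdots$ whose union is $\Attr_{\Odd}^{\Gc}(B)$, proving $T \cap \Attr_m = \emptyset$ for every~$m$. The base case is the hypothesis $T \cap B = \emptyset$, and in the inductive step the trap property of~$T$ does all the work: an Odd vertex of~$T$ has all of its outgoing edges inside~$T$ (hence, by the inductive hypothesis, none into~$\Attr_m$), while an Even vertex of~$T$ retains at least one outgoing edge inside~$T$ (hence not all of its edges lead into~$\Attr_m$); in either case the vertex is not added at stage~$m+1$. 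I expect no genuine obstacle here: the statement is a short corollary, and the only things to get right are the empty-attractor specialization and the orientation of the player roles.
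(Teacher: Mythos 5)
Your proposal is correct and takes essentially the same route as the paper, which proves Proposition~\ref{prop:trap-safe-from-attractor} exactly as you do: as a corollary of the dual of Proposition~\ref{prop:attractor-inside-trap} specialized to $B \cap T = \emptyset$, where the inner attractor $A'$ collapses to $\emptyset$ and hence $T$ must lie inside the subgame $\Gc \setminus \Attr_{\Odd}^{\Gc}(B)$. If you prefer not to lean on the well-formedness of the dual proposition's conclusion, note that the appendix proof establishes the containment $A \cap T \subseteq A'$ directly, which with $A' = \emptyset$ yields $T \cap \Attr_{\Odd}^{\Gc}(B) = \emptyset$ at once; your fixpoint-approximant induction is a sound self-contained alternative with the player roles handled correctly.
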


\begin{proofappendix}{thm:dominion-separation}{\dominionseparation}
  We prove the statement
  of part~\ref{item:dominion-separation-a};
  the proof of part~\ref{item:dominion-separation-b} is analogous. 

  The proof is by induction on the height of 
  tree~$\Tc^{\Odd} \interleaving \Tc^{\Even}$
  (the ``outer'' induction).
  If the height is~$0$ then tree $\Tc^{\Odd}$ is the trivial
  tree~$\seq{}$; 
  hence $k = 0$, the algorithm returns the set $V^{\Gc_1} = V^{\Gc}$,
  which contains the largest Even dominion, and which is trivially
  disjoint from the largest Odd dominion 
  (because the latter is empty). 

  If the height of $\Tc^{\Odd} \interleaving \Tc^{\Even}$ is positive, then
  we split the proof of the separation property into two parts.

\begin{figure}
  \centering
  \makebox[\textwidth][c]{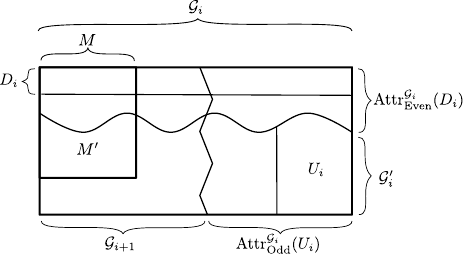}
  \caption{Attractors, subgames, and dominia in the first part of the
    proof of the dominion separation theorem.} 
  \label{figure:dominion-separation-even}
\end{figure}

  \paragraph*{Even dominia embedded by $\Tc^{\Even}$ are included
    in~$\Gc_{i+1}$.}     
  We prove by induction on~$i$ (the ``inner'' induction) that for 
  $i = 0, 1, 2, \dots, k$, if $M$ is an Even dominion in~$\Gc$ that 
  $\Tc^{\Even}$ embeds, then $M \subseteq \Gc_{i+1}$. 

  For $i = 0$, this is moot because $\Gc_1 = \Gc$. 
  
  For $i > 0$, let $M$ be an Even dominion that has an Even
  $d$-attractor decomposition~$\Hc$ such that $\Tc^{\Even}$
  embeds~$\Tc_{\Hc}$. 
  The inner inductive hypothesis (for~$i-1$) implies that 
  $M \subseteq \Gc_i$.  

  The reader is encouraged to systematically refer to
  Figure~\ref{figure:dominion-separation-even} to better follow the
  rest of this part of the proof.  
  
  Let $M' = M \setminus \Attr_{\Even}^{\Gc_i}(D_i)$. 
  Because $\Gc_i \setminus \Attr_{\Even}^{\Gc_i}(D_i)$ is a
  trap for Even in $\Gc_i$ and $M$ is a trap for Odd in~$\Gc_i$,
  the dual of Proposition~\ref{prop:dominion-cap-trap} yields that $M'$
  is a trap for Even in $\Gc_i \cap M$.

  Then, because $\Hc$ is an Even $d$-attractor decomposition of
  $\Gc \cap M$, it follows by
  Theorem~\ref{thm:embeddable-decomposition} that there is an Even  
  $d$-attractor decomposition $\Hc'$ of $\Gc_i \cap M'$
  such that $\Tc_{\Hc}$ embeds~$\Tc_{\Hc'}$, and hence also
  $\Tc^{\Even}$ embeds~$\Tc_{\Hc'}$. 

  Therefore, because $M'$ is an Even dominion in the game
  $\Gc_i \setminus \Attr_{\Even}^{\Gc_i}(D_i)$,
  part~\ref{item:dominion-separation-b} of the outer inductive
  hypothesis 
  yields $M' \cap U_i = \emptyset$.

  Finaly, because 
  $M \setminus M' \subseteq \Attr_{\Even}^{\Gc_i}(D_i)$
  and $(M' \setminus M) \cap U_i = \emptyset$, 
  it follows that $M \cap U_i = \emptyset$. 
  By Proposition~\ref{prop:trap-safe-from-attractor},
  we obtain $M \cap \Attr_{\Odd}^{\Gc_i}(U_i) = \emptyset$
  and hence $M \subseteq \Gc_{i+1}$. 
  
\begin{figure}
  \centering
  \makebox[\textwidth][c]{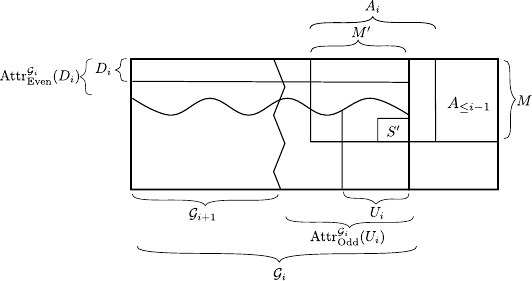}
  \caption{Attractors, subgames, and dominia in the second part of the
    proof of the dominion separation theorem.} 
  \label{figure:dominion-separation-odd}
\end{figure}

  \paragraph*{Odd dominia embedded by 
    $\seq{\Tc_1^{\Odd}, \hdots, \Tc_i^{\Odd}}$ are disjoint
    from~$\Gc_{i+1}$.}   
  We prove by induction on~$i$ (another ``inner'' induction)
  that for $i = 0, 1, \dots, k$, if $M$ is an Odd dominion in~$\Gc$
  that $\seq{\Tc^\Odd_1,\hdots,\Tc^\Odd_i}$ embeds, then 
  $\Gc_{i+1} \cap M = \emptyset$. 

  For $i = 0$, note that 
  $\seq{\Tc_1^{\Odd}, \dots, \Tc_i^{\Odd}} = \seq{}$ 
  and the only Odd dominion~$M$ in~$\Gc$ that has an Odd 
  $(d+1)$-attractor decomposition whose tree is the trivial
  tree~$\seq{}$ is the empty set, and hence $\Gc_1 \cap M =
  \emptyset$, because $\Gc_1 = \Gc$.  

  The reader is encouraged to systematically refer to
  Figure~\ref{figure:dominion-separation-odd} to better follow the
  rest of this part of the proof.  

\newcommand{\ddi}{\bar{\imath}}

  For $i > 0$, let 
  \[
    \Hc = \seq{\emptyset, 
      (S_1, \Hc_1, A_1), \dots, (S_{\ddi}, \Hc_{\ddi}, A_{\ddi})}  
  \]
  be an Odd $(d+1)$-attractor decomposition of $\Gc \cap M$ such that  
  $\seq{\Tc^\Odd_1, \dots, \Tc^\Odd_i}$ embeds $\Tc_{\Hc}$. 
  Note that the embedding implies that~$\ddi \leq i$. 

  If $\seq{\Tc_1^{\Odd}, \dots, \Tc_{i-1}^{\Odd}}$ embeds $\Tc_\Hc$ 
  then the inner inductive hypothesis (for~$i-1$) implies that
  $\Gc_i \cap M = \emptyset$ and thus
  $\Gc_{i+1} \cap M = \emptyset$ since $\Gc_{i+1} \subseteq \Gc_i$.

  Otherwise, it must be the case that
  \begin{equation}
  \label{eq:last-subtrees-embed}
    \text{$\Tc^{\Odd}_i$ embeds $\Tc_{\Hc_{\ddi}}$}\,.
  \end{equation}
  Observe that the set 
  $A_{\leq {\ddi}-1} = A_1 \cup A_2 \cup \cdots \cup A_{{\ddi}-1}$
  is a trap for Even in~$\Gc \cap M$, and hence by trap transitivity
  it is a trap for Even in~$\Gc$ because~$M$ is a trap for Even
  in~$\Gc$.  
  Moreover, subgame $\Gc \cap A_{\leq {\ddi}-1}$  
  has an Odd $(d+1)$-attractor decomposition 
  \[
    \Ic \: = \:
    \seq{\emptyset, (S_1, \Hc_1, A_1), \dots, 
      (S_{\ddi-1}, \Hc_{\ddi-1}, A_{\ddi-1})}
  \]
  in~$\Gc$ and hence---by the dual of
  Proposition~\ref{prop:decomposition-dominion-even}---it is an Odd
  dominion in~$\Gc$, and ordered tree  
  $\seq{\Tc_1^{\Odd}, \dots, \Tc_{i-1}^{\Odd}}$ embeds $\Tc_\Ic$. 
  Hence, the inner inductive hypothesis (for~$i-1$) yields
  \begin{equation}
  \label{eq:A-leq-i-prime-minus-1}
    \Gc_i \cap A_{\leq \ddi-1} \: = \: \emptyset \,.
  \end{equation}

  Set $M' = \Gc_i \cap M$ and note that not only 
  $M' \subseteq A_{\ddi}$, but also $M'$ is a trap for Odd
  in~$A_{\ddi}$, because $\Gc_i$ is a trap for Odd in~$\Gc$. 
  Moreover---by Proposition~\ref{prop:dominion-cap-trap}---$M'$
  is an Odd dominion in~$\Gc_i$ because $\Gc_i$ is a trap for Odd
  in~$\Gc$ and $M$ is a dominion for Odd in~$\Gc$.  

  Observe that
    $\Jc =
    \seq{\emptyset, (S_{\ddi}, \Hc_{\ddi}, A_{\ddi})}$
  is an Odd $(d+1)$-attractor decomposition of $\Gc \cap A_{\ddi}$. 
  By the embeddable decomposition theorem 
  (Theorem~\ref{thm:embeddable-decomposition}), it follows that there 
  is an Odd $(d+1)$-attractor decomposition $\Kc$ of $\Gc \cap M'$
  such that $\Tc_{\Jc}$ embeds~$\Tc_{\Kc}$.
  Because of this embedding, $\Kc$ must have the form 
    $\Kc = \seq{\emptyset, (S', \Kc', M')}$.
  Since $\Tc_\Jc$ embeds $\Tc_{\Kc}$, we also have that 
  $\Tc_{\Hc_{\ddi}}$ embeds $\Tc_{\Kc'}$,
  and hence---by (\ref{eq:last-subtrees-embed})---$\Tc^\Odd_i$ 
  embeds~$\Tc_{\Kc'}$. 

  Note that $S'$ is a trap for Odd in $\Gc \cap M'$
  in which every vertex priority is at most~$d-1$, because $\Kc$ is 
  an Odd $(d+1)$-attractor decomposition of~$\Gc \cap M'$. 
  It follows that $S'$ is also an Odd dominion in 
  $\Gc_i \setminus \Attr_{\Even}^{\Gc_i}(D_i)$. 

  The outer inductive hypothesis 
  then yields $S' \subseteq U_i$.
  It follows that 
  \[
    M' = \Attr_{\Odd}^{\Gc_i \cap M'}(S')
    \subseteq \Attr^{\Gc_i}_\Odd(S') 
    \subseteq \Attr^{\Gc_i}_\Odd(U_i) \,,
  \]
  where the first inclusion holds because $M'$ is a trap for Even
  in~$\Gc_i$, and the second follows from monotonicity of the
  attractor operator. 
  When combined with with~(\ref{eq:A-leq-i-prime-minus-1}), this
  implies $\Gc_{i+1} \cap M = \emptyset$. 
\end{proofappendix}
\section{Floweriness lemma}

In this Appendix, we will prove Lemma~\ref{lemma:flowery} and Theorem~\ref{thm:nfe-complexity}.
Whenever we want to denote the fixpoint obtained by repeated application of a monotone function $f$ on a set, we call this $f^*$.
Before we embark on the proofs, we would like to call attention to following property of flowery subgames. It shows how complements of two specific kinds of flowery sets result in another flowery subgame. We will use this property in several of our proofs.
\begin{property}\label{prop:compl}
For $A \subseteq Y\subseteq X \subseteq (U\times [d])$, we have:
$$\flower(X,Y) \setminus  \flower(X, A) = \flower(X\setminus A, Y\setminus A).$$
Notice that $\flower(X\setminus A, Y\setminus A) = \flower(Z\cup W, W)$, where $Z = X\setminus Y$ and $W = X\setminus A$.
\end{property}

Consider the following proposition useful in the proof of the Lemma~\ref{lemma:flowery}.

\begin{proposition}
Given a fixpoint game $\Gc_f$, after removing the Even attractor to the set of Odd vertices with no outgoing edges and the Odd attractor to the Even vertices with no outgoing edges, we are left with a flowery subgame. 
\end{proposition}
\begin{proof}
The game $\Gc_f$ contains exactly the vertices in
the subgame $\flower(U\times [d], U\times [d])$ along with $v_\emptyset$. 
\begin{itemize}
    \item Initially, we remove the only Odd vertex with no outgoing edge: $v_\emptyset$, along with its Even attractor. 
The Even attractor to $v_\emptyset$ in $\Gc_f$ is exactly all the vertices of the flowery set 
$\flower(Z,Z)$ and $v_\emptyset$, where $Z = f^*(\emptyset)$ winning for Even.
The remaining subgame after removing these vertices is the flowery subgame $\flower(U\times [d],\bigl(U\times [d]\bigr)\setminus Z)$ from Property~\ref{prop:compl}.
\item Let us call the flowery subgame obtained from the above procedure $\flower(X,Y)$. Observe that if $Y\subseteq f(X)$, then there is always an outgoing edge for each vertex in the subgame. If not, we remove the Odd attractor to the set of Even vertices with no outgoing edges: $Y\cap \overline{f(X)}$. The complement of this Odd attractor turns out to be the flowery subgame $\flower(X, Y\setminus \overline{f(X)})$ from Property~\ref{prop:compl}.\qedhere
\end{itemize}
\end{proof}
Assuming now that we always have outgoing edges in flowery subgames, we consider the following Lemma which shows how we can compute attractors to sets in these subgames with at most $d\cdot |U|$ many calls to the function $f$.

Let us now prove Lemma~\ref{lemma:flowery} by instead proving a stronger statement stated in Lemma~\ref{lemma:subgames}. To lead to the proof of Lemma~\ref{lemma:subgames}, we need Lemma~\ref{lemma:attractorNFE} which states intuitively that computing attractors to specific flowery subgames lead to specific flowery subgames whose complement is also flowery. 
\begin{lemma}
In a flowery subgame $\Gc = \flower(X,Y)$:
\begin{itemize}
     \item[(a)]the Even attractor to a set of Even vertices $A\subseteq Y$  in $\Gc = \flower(X,Y)$ where $Z = X\setminus Y$ is $$\flower(Z\cup Pre^*_{\Gc, \Even}(A),Pre^*_{\Gc, \Even}(A))$$ 
    where $Pre_{\Gc, \Even}(A) = \bigl(f(Z\cup A)\cap Y\bigr)\cup A$;
    \item[(b)] the Odd attractor to a set of Even vertices $A$ or a subgame $\flower(X,A)$ in $\Gc = \flower(X,Y)$ is $$\flower(X,Pre^*_{\Gc, \Odd}(A))$$ 
    where $Pre_{\Gc, \Odd}(A) = (\overline{ f(X\setminus A)}\cap Y)\cup A$.
\end{itemize}
\label{lemma:attractorNFE}
\end{lemma}

We will break down our Lemma into Propositions~\ref{prop:attractorNFEEven} and~\ref{prop:attractorNFEOdd} which will result in  Corollary~\ref{cor:subgameNFEattractor} from which Lemma~\ref{lemma:attractorNFE} follows.

\begin{proposition}
In a flowery subgame $\Gc = \flower(X,Y)$ and $A\subseteq Y$,
the flowery subgame $\flower(Z\cup Pre_{\Gc,\Even}(A), Pre_{\Gc,\Even}(A))$ is exactly the set of vertices from which Even has a strategy to visit $A$ in at most three steps, where $Pre_{\Gc, \Even}(A) = \bigl(f(Z\cup A)\cap Y\bigr)\cup A$.
\label{prop:attractorNFEEven}
\end{proposition}

\begin{proof}
 We will argue about vertices from which Even has a strategy to visit vertices in $A$ in at most one, two and three steps below.
    \begin{itemize}
        \item[(1)] Consider any Odd vertex $v_B$ where  $B \subseteq Z\cup A$ and the intersection of $B$ with $A$ is non empty. From such a $v_B$, in one step, Even can ensure that a play reaches $A$. All such vertices $v_B$ along with the core $A$ is exactly denoted by the vertices of the subgame $\flower(Z\cup A,A)$.
        \item[(2)] We will show that from any Even vertex $(u,i)\in Pre_{\Gc, \Even}(A) =  f(Z\cup A)\cap Y\}\cup A$, there is a strategy for Even to reach a Even vertex in $A$ in at most two steps. To show this, we will show that:
    \begin{itemize}
        \item[$(\Rightarrow)$] in one step, Even can move to some Odd vertex $v_B\in \flower(Z\cup A,A)$;
        \item[$(\Leftarrow)$] from vertices not in $Pre_{\Gc, \Even}(A)$, all of Even's outgoing edges lead to a vertex not in $\flower(Z\cup A,A)$.
    \end{itemize}
    
    To show the forward direction, let $(u,i)\in (f(Z\cup A)\cap Y)\cup A$, if $(u,i)\in A$ then we are done, if not, the strategy for Even from $(u,i)$ is to choose the Odd vertex $v_{Z\cup A}$ and such an edge exists since $(u,i)\in f(Z\cup A)$, and this Odd vertex is in the flowery subgame $\flower(Z\cup A,A)$.
    
    To show the reverse direction,
    Consider $(u,i)\notin f(Z\cup A)\cup A$ but $(u,i)\in Y$. 
    All edges out of the Even vertex $(u,i)$ leads to an  Odd vertex $v_B$ in $\flower(X,Y)$ such that $B$ has some element other than from $Z$ or $A$ i.e,
    $B\setminus (Z\cup A) \neq \emptyset$. This follows from the monotonicity of $f$ along with our assumption that $(u,i)\notin f(Z\cup A)$. 
    After one step, the game is at an Odd vertex $v_B$ 
    that it is not in $\flower(Z\cup A,A)$.
        \item[(3)]The argument to conclude that $\flower(X,Pre_{\Gc, \Odd}(A))$ is exactly the set we desire is similar to (1).\qedhere
    \end{itemize}
\end{proof}
\begin{proposition}
In a flowery subgame $\Gc = \flower(X,Y)$ and $A\subseteq Y$,
 From any vertex of the flowery subgame $\flower(X,Pre_{\Gc, \Odd}(A))$, Odd has a strategy to visit a set of Even vertices $A$ in at most three steps 
    where $Pre_{\Gc, \Odd}(A) = (\overline{ f(X\setminus A)}\cap Y)\cup A$. The above subgame is the exact set of vertices from which Odd has such a strategy.
\label{prop:attractorNFEOdd}
\end{proposition}
\begin{proof}
 We show the set of vertices from which Even has a strategy to visit vertices in $A$ in at most one, two and three steps below.
    \begin{itemize}
        \item[(1)]From vertex $v_B$ where $B$ of $X$ which intersects with $A$ non-trivially, Odd would be able to reach a vertex in $A$ in at most one step. This exactly is all the Odd vertices in the flowery subgame $\flower(X,A)$. 
        \item[(2)]We will show that in one step, Odd has a strategy to visit the subgame $\flower(X,A)$ from vertices in $Pre_{\Gc, \Odd}(A)\cup A$. We do this by showing inclusion in two direction. 
        
        \begin{itemize}
            \item[$(\Rightarrow)$]Consider $(v,j)\in  Pre_{\Gc, \Odd}(A)  = (\overline{ f(X\setminus A)}\cap Y)\cup A$. If $(v,j)\notin A$, then $(v,j)\in Y$ and $\overline{f(X\setminus A)}$. Mainly note that $(v,j)\notin f(X\setminus A)$. Since all subgames are such that there is always an outgoing edge and given that $f$ is monotone, any Odd vertex $v_B$ in $\flower(X,Y)$ which has an edge to it from $(v,j)$ must be such that 
    $B\cap A\neq \emptyset$. For any choice successors from $(v,j)$ of Even will lead to a vertex $B$ which intersects with $A$ and hence there is a strategy for Odd to move to a vertex in $(u,i)$ in $B\cap A$. 
            \item[$(\Leftarrow)$]  Now we need to show a strategy for Even to remain in the complement of the game $\flower(X,Pre_{\Gc, \Odd}(A))$ for two steps from all other Even vertices. Let us denote $Pre_{\Gc, \Odd}(A)$ by $W$. Note that the complement of  $\flower(X,W)$ in $\flower(X,Y)$ is $\flower(X\setminus W, Y\setminus W)$. Notice that 
    $$Y\setminus W = Y\setminus\bigl( \overline{ f(X\setminus A)}\cup A\bigr)$$ 
    So, any $(w,j)\in Y\setminus Z$ is in $Y$ and since $(w,j)\notin W$, $(w,j)\in f(X\setminus A)$. This means that from any such $(w,j)$, Even can choose the vertex $v_B$ in  $\flower(X\setminus W, Y\setminus W)$ where $B \subseteq X\setminus A$, making sure that in the next step Odd will not be able to take the play to an Even vertex in $A$. 
        \end{itemize}
        \item[(3)]From the structure of the game, it is easy to see that any $v_B$ such that $B$ intersects with $Pre_{\Gc, \Odd}(A)\cup A$ would be able to visit an element in $Pre_{\Gc, \Odd}(A)\cup A$, which we have shown is exactly the set of vertices from which Odd could force the play in at most two steps to visit $A$.\qedhere 
    \end{itemize}
\end{proof}
From the proof of the Propositions~\ref{prop:attractorNFEEven} and~\ref{prop:attractorNFEOdd}, we can extend these to show the following Corollary from which Lemma~\ref{lemma:attractorNFE} follows.
\begin{corollary}\label{cor:subgameNFEattractor}
In a flowery subgame $\Gc = \flower(X,Y)$ and $A\subseteq Y$,
\begin{itemize}
    \item The flowery subgame $\flower(Z\cup Pre_{\Gc,\Even}(A), Pre_{\Gc,\Even}(A))$ is the set of vertices from which Even has a strategy to visit the vertices in $\flower(Z\cup A, A)$ in at most two steps, where $Pre_{\Gc, \Even}(A) = \bigl(f(Z\cup A)\cap Y\bigr)\cup A$;
    \item  The vertices of $\flower(X,Pre_{\Gc, \Odd}(A))$ is the set of vertices from which Odd has a strategy to visit a vertex in $\flower(X,A)$ in at most two steps.
\end{itemize}
\end{corollary}
We state that Lemma~\ref{lemma:attractorNFE} follows naturally from Corollary~\ref{cor:subgameNFEattractor} and conclude the proof of Lemma~\ref{lemma:attractorNFE}.
We will now proceed to the main proof of the section:
\recall{\flowerylemma}

We will instead prove a stronger version of Lemma~\ref{lemma:flowery}, stated below:
\begin{lemma}\label{lemma:subgames}
\begin{itemize}
    \item[$(i)$] If \solveEven is run on a flowery subgame $\flower(X,Y)$,
    then in all iterations in the for-loop in the subgame $\Gc_i$ is of the form $\flower(X\setminus A_i',Y\setminus A_i')$ for $A_i'\subseteq Y$, in particular $\Gc_{k+1}$, which is the set of vertices returned. 
    \item[$(ii)$] If \solveOdd is run on a flowery subgame $\flower(X,Y)$,
    then in all iterations in the for-loop, the subgame $\Gc_i$ is  of the form $\flower(X,Y\setminus A_i')$, where $A_i'\subseteq Y$ in particular $\Gc_{k+1}$, which is the set of vertices returned. 
\end{itemize}
\end{lemma}

\begin{proof}[Proof of \Cref{lemma:subgames,lemma:flowery}]
    \label{proof-lemma:flowery}
We will prove this by induction on the sum of the number of vertices  in these subgames and the number of vertices on which these calls are made.
For the base case, with an empty set irrespective of any priority, the above statement is trivially true.
We will now prove that $(i)$ and $(ii)$ hold for games with at least one vertex and trees $\Tc_{\Even}$ and $\Tc_{\Odd}$. The proof follows from Lemma~\ref{lemma:attractorNFE} and induction as shown.
\paragraph*{$(i)$} Since $\Gc_1 = \Gc = \flower(X,Y)$, we show that if $\Gc_i$ is of the form $\flower(X\setminus A_i', Y\setminus A_i')$  where $A_i\subseteq Y$. For convienience, we will call $X\setminus A_i'$ as $X_i$ and $Y\setminus A_i'$ as $Y_i$. We will show that $\Gc_{i+1}$ is of the form $\flower(X\setminus A_{i+1}', Y\setminus A_{i+1}')$ by showing that in fact it is $\flower(X_i\setminus A_{i+1}', Y_i\setminus A_{i+1}')$  for some $A_{i+1}'\subseteq Y$. First notice that $\Gc_i' = \Gc_i\setminus Attr_{\Even}^{\Gc_i}(D_i)$, where $D_i$ is some subset of Even vertices. From Lemma~\ref{prop:attractorNFEEven}, we have for $Z = X\setminus Y$,
$$\Gc_i\setminus Attr_{\Even}^{\Gc_i}(D_i) = \flower(X_i, Y_i')\setminus \flower(Z\cup Pre^*_{\Gc_i,\Even}(D_i),Pre^*_{\Gc_i,\Even}(D_i))$$
Since $Z = X\setminus Y = X_i\setminus Y_i$, we have 
$$\Gc_i' = \Gc_i' = \flower(X_i, Y_i\setminus Pre^*_{\Gc_i,\Even}(D_i))$$

The $U_i$ computed by performing \solveOdd on $\Gc_i'$ must be of the form $ \flower(X_i, Z_i)$ for $Z_i\subseteq Y_i$ by induction and the attractor to $U_i$,  must be of the form $\flower(X_i, W_i)$ from Proposition~\ref{prop:attractorNFEEven}. Hence $$\Gc_{i+1} = \flower(X_i,Y_i)\setminus \flower(X_i, W_i) = \flower(X_i\setminus W_i, X_i\setminus W_i).$$

\paragraph*{$(ii)$} We will show that if $\Gc_i$ is of the form $\flower(X,Y_i)$, then $\Gc_{i+1}$ is of the form $\flower(X, Y_{i+1})$ for $Y_{i+1}\subseteq Y_i$. 
In each iteration $i$, the Odd attractor to $D_i$ in $\Gc_i$ is of the form $\flower(X,A_i)$. This shows that $\Gc_i'$, which is obtained by removing the Odd attractor $\flower(X,A_i)$ from $\Gc_i$ is of the form $\flower(X_i\setminus A_i,Y_i\setminus A_i)$. 
Running \solveOdd on $\Gc_i$ gives $U_i$ of the form $\flower(X_i \setminus W_i, Y_i\setminus W_i)$ by induction, and an Even attractor to the set $\flower(X_i \setminus W_i, Y_i\setminus W_i)$  would be of the flowery subgame $\flower(X_i \setminus W_i', Y_i\setminus W_i')$ for some $W_i'\subseteq W_i$. So, $\Gc_{i+1}$, which is obtained from removing this Even attractor from $\Gc_i$ would be obtained as follows
\[
    \Gc_{i+1} \:= \Gc_i\setminus \flower(X_i \setminus W_i', Y_i\setminus W_i') = \flower(X, Y_i\setminus W_i').\qedhere
\]
\end{proof}

\section{Symbolic algorithm}

In this appendix we describe how the number of symbolic set variables
in the symbolic implementation of the universal algorithm
can be further reduced from~$O(d)$ to~$O(\log d)$, leading to Theorem~\ref{thm:symbolic}.

\begin{proofappendix}{thm:symbolic}{\symbolic}
  We use letters $G$, $D$, $G'$, and~$U$ to denote the sets $V^{\Gc_i}$,
  $D_i$, $V^{\Gc'_i}$, and $U_i$ for some $i$-th iteration of any of the
  recursive calls of the universal algorithm. 
  Observe that we do not need to keep the symbolic variables that store
  the sets $D$, $G'$, and~$U$ on the stack of recursive calls because on
  any return from a recursive call, their values are not needed to
  proceed.  
  How can we store the sets denoted by all the symbolic set
  variables~$G$ on the stack using only $O(\log d)$ symbolic set
  variables, while the height of the stack may be as large as~$d$?   

  Firstly, we argue that we can symbolically represent a sequence 
  $\seq{G_{d-1}, \dots, G_i}$ of set variables that would normally  
  occur on the stack of recursive calls of the universal algorithm, by
  another sequence $\seq{H_{d-1}, \dots, H_0}$, in which the sets form a
  partition of the set of vertices in the parity game. 
  Indeed, a sequence $\seq{G_d, \dots, G_i}$ on the stack of recursive
  calls at any time forms a descending chain w.r.t.\ inclusion, and
  $G_d$ is the set of all vertices, so it suffices to consider the
  sequence 
  $\seq{G_d \setminus G_{d-1}, \dots, G_{i+1} \setminus G_i, G_i,
    \emptyset, \dots, \emptyset}$. 

  Secondly, we argue that the above family of $d$ mutually disjoint sets
  can be succinctly represented and maintained using~$O(\log d)$ set
  variables.
  W.l.o.g., assume that $d$ is a power of~$2$.
  For every $k = 1, 2, \dots, \lg d$, and for every 
  $i = 1, 2, \dots, d$, let $\mathrm{bit}_k(i)$ be the $k$-th digit
  in the binary representation of~$i$
  (and zero if there are less than $k$ digits). 
  We now define the following sequence of sets 
  $\seq{S_1, S_2, \dots, S_{\lg d}}$ that provides a succinct
  representation of the sequence $\seq{H_{d-1}, \dots, H_0}$. 
  For every $k = 1, 2, \dots, \lg d$, we set:
  \[
    S_k \: = \: 
    \bigcup \left\{H_i \: : \: 
      0 \leq i \leq d-1 \text{ and } \mathrm{bit}_k(i)=1\right\}\,.
  \]
  By sets $\seq{H_{d-1}, \dots, H_0}$ forming a partition of the set of
  all vertices, it follows that for every $i = 0, 1, \dots, d-1$, we
  have: 
  \[
    H_i \: = \:
    \bigcap \left\{S_k \: : \: 
      1 \leq k \leq \lg d \text{ and } \mathrm{bit}_k(i)=1\right\} 
    \cap
    \bigcap \left\{\overline{S_k} \: : \: 
      1 \leq k \leq \lg d \text{ and } \mathrm{bit}_k(i)=0\right\}\,,
  \]
  where $\overline{X}$ is the complement of set~$X$. 

  What remains to be shown is that the operations on the sequence of
  sets $\seq{G_{d-1}, \dots, G_i}$ that reflect changes on the stack of
  recursive calls of the universal algorithm can indeed be implemented
  using small numbers of symbolic set operations on the succinct
  representation $\seq{S_1, \dots, S_{\lg d}}$ of the sequence
  $\seq{H_{d-1}, \dots, H_0}$. 
  We note that there are two types of changes to the sequence
  $\seq{G_{d-1}, \dots, G_i}$ that the universal algorithm makes:
  \begin{enumerate}[label=(\alph*)]
  \item
    all components are as before, except for $G_i$ that is replaced  
    by~$G_i \setminus B$, for some set~$B \subseteq G_i$;

  \item
    all components are as before, except that a new entry $G_{i-1}$ is 
    added equal to $G_i \setminus B$, for some set~$B \subseteq G_i$.  
  \end{enumerate}
  The corresponding changes to the sequence 
  $\seq{H_{d-1}, \dots, H_0}$ are then: 
  \begin{enumerate}[label=(\alph*)]
  \item
    \label{item:slice-U}
    all components are as before, except that set $H_{i+1}$ is replaced
    by $H_{i+1} \cup B$, and set $H_i$ is replaced by~$H_i \setminus B$;

  \item
    \label{item:recursive-call}
    all components are as before, except that set $H_i$ is replaced
    by~$B$, and set $H_{i-1}$ is replaced by~$H_i \setminus B$. 
  \end{enumerate}
  To implement the update of type~\ref{item:slice-U}, it
  suffices to perform the following update to the succinct
  representation: 
  \[
    S'_k \: = \:
    \begin{cases}
      S_k  &  \text{if $\mathrm{bit}_k(i+1) = \mathrm{bit}_k(i)$}, \\
      S_k \cup B  &  
        \text{if $\mathrm{bit}_k(i+1)=1$ and $\mathrm{bit}_k(i)=0$}, \\ 
      S_k \setminus B  &  
        \text{if $\mathrm{bit}_k(i+1)=0$ and $\mathrm{bit}_k(i)=1$}. 
    \end{cases}
  \]
  and to to implement the update of
  type~\ref{item:recursive-call}, it suffices to perform the following: 
  \[
    S'_k \: = \:
    \begin{cases}
      S_k  &  \text{if $\mathrm{bit}_k(i) = \mathrm{bit}_k(i-1)$}, \\
      S_k \setminus (H_i \setminus B)  &  
        \text{if $\mathrm{bit}_k(i)=1$ and $\mathrm{bit}_k(i-1)=0$}, \\ 
      S_k \cup (H_i \setminus B)  &  
        \text{if $\mathrm{bit}_k(i)=0$ and $\mathrm{bit}_k(i-1)=1$}.
    \end{cases}\qedhere
  \]
\end{proofappendix}

\end{document}